\documentclass[10pt]{article}

\usepackage{a4wide}

\usepackage{amsfonts,epsfig,rotating,amssymb,framed}

\topmargin=-36pt

\newtheorem{theorem}{Theorem}[section]

\newtheorem{corollary}[theorem]{Corollary}

\newtheorem{definition}[theorem]{Definition}
\newtheorem{remark}[theorem]{Remark}
\newtheorem{construction}[theorem]{Construction}

\newenvironment{proof}{\noindent{\bf Proof~}}{\null\hfill $\Box$\par\medskip}

\setcounter{topnumber}{10}
\setcounter{bottomnumber}{10}
\setcounter{totalnumber}{10}


\usepackage{mathrsfs}
\usepackage{colonequals}
\usepackage{paralist}
\usepackage{amsmath}
\usepackage{listings}
\usepackage{subfig}
\lstdefinelanguage{ mylng }{
	morekeywords={while,true}
}
\lstset{language=mylng, mathescape=true, breaklines=true,numbers=left,columns=fullflexible}

\newcommand{\q}[1]{``#1''}
\newcommand{\fu}[1]{\mathcal{#1}}
\newcommand{\mc}[1]{\mathsf{#1}}
\newcommand{\ri}[1]{\mathscr{#1}}
\newcommand{\co}[1]{\fu{C}({#1})}
\newcommand{\lc}[1]{\alpha(#1)}
\newcommand{\rc}[1]{\beta(#1)}
\newcommand{\ema}[1]{\mathcal{#1}}
\newcommand{\fe}[2]{\fu{F}(#1,#2)}
\newcommand{\lv}[1]{l_{#1}}
\newcommand{\rv}[1]{r_{#1}}
\newcommand{\mv}[1]{m_{#1}}
\newcommand{\lvv}{\lv{v}}
\newcommand{\rvv}{\rv{v}}
\newcommand{\mvv}{\mv{v}}
\newcommand{\svv}{s_v}
\newcommand{\cordered}{\fu{S}}
\newcommand{\lvi}[2]{\lv{}^{#2}(#1)}%
\newcommand{\rvi}[2]{\rv{}^{#2}(#1)}%
\newcommand{\lvd}[1]{\fu{L}_v(#1)}
\newcommand{\rvd}[1]{\fu{R}_v(#1)}
\newcommand{\dom}{\fu{D}(G)}
\newcommand{\first}[1]{\fu{H}(#1)}
\newcommand{\last}[1]{\fu{T}(#1)}

\makeatletter
\renewcommand{\p@enumii}{\theenumi.(}

\newcommand{\ssref}[1]{(\ref{#1})}
\makeatother


\setcounter{secnumdepth}{4}


\begin{document}

\title{Interval Routing Schemes for Circular-Arc Graphs}

\author{
Frank Gurski \\
University of D\"usseldorf\\
Institute of Computer Science\\
Algorithmics for Hard Problems Group\\
D-40225 D\"usseldorf\\
\texttt{frank.gurski@hhu.de}
\and 
Patrick Gwydion Poullie\thanks{Corresponding author.} \\
University of Zurich \\
Department of Informatics \\ Communication Systems Group\\
CH-8050 Z\"urich\\
\texttt{poullie@ifi.uzh.ch}
}

\maketitle


\begin{abstract}
Interval routing is a space efficient method to realize a distributed 
routing function.
In this paper we show that every circular-arc graph allows a shortest 
path strict 2-interval routing scheme, i.e., by introducing a global order
on the vertices and assigning at most two (strict) intervals in this order
to the ends of every edge allows to depict a routing function that 
implies exclusively shortest paths. 
Since circular-arc graphs do not allow shortest path 1-interval routing 
schemes in general, the result implies that the class of circular-arc 
graphs has strict compactness 2, which was a hitherto open question.
Additionally, we show that the constructed 2-interval routing scheme is 
a 1-interval routing scheme with at most one additional 
interval assigned at each vertex and we outline an algorithm 
to calculate the routing scheme for circular-arc graphs in 
$\mathcal{O}(n^2)$ time, where $n$ is the number of vertices.

\bigskip
\noindent
{\bf Keywords:} interval routing, compact routing, circular-arc graphs, cyclic permutations
\end{abstract}

\section{Introduction}

Routing is an essential task that a network of processors or computers must be able to 
perform. Interval routing is a space-efficient solution to this problem.
Sets of consecutive destination addresses that use the same output port are grouped into 
intervals and then assigned to this port.
In this way, the storage space required is greatly reduced compared to the straight 
forward approach, in which an output port is stored specifically for every destination address.
Of course, the advantage in space efficiency depends heavily on the number of intervals assigned 
to the output ports, which, in turn, depends on the address and network topology.
Interval routing was first introduced in \cite{Santoro82routingwithout,Santoro85labelling}; 
for a discussion of interval routing we refer the reader 
to \cite{Leeuwen87interval,FredericksonJ88,Bakker91linearinterval} 
and for a detailed survey to \cite{Gavoille00asurvey}.
When theoretical aspects of interval routing are discussed, the network is represented 
by a directed (symmetric) graph $G=(V,A)$. 
As with many routing methods, the problem is that space efficiency and path optimality are 
conflicting goals \cite{65953}: as shown in \cite{Guevremont98worstcase}, for every $n\in\mathbb{N}$, 
there exists an $n$-vertex graph $G_n$ such that for every shortest 
path interval routing scheme (for short IRS) for $G_n$ the maximal number of intervals per directed edge 
is only bounded by $\Omega(n)$. Also, it is NP-hard to determine the most space efficient 
shortest path IRS for a given graph \cite{Eilam2002,Flammini1997}.
Other worst case results can be found in \cite{savio99onthespace}.
On the other hand, there are many special graph classes \cite{FG98} including 
random graphs \cite{GP1998} that are known to allow shortest path IRSs 
with a constantly bounded number of intervals on all directed edges. 
If this number is a tight bound, it is denoted the compactness of the graph.
The compactness of undirected graphs is defined as the compactness of their directed 
symmetric version\footnote{The \emph{directed symmetric version of an undirected graph} 
$G_u=(V,E)$ is the directed graph $G_d=(V,A)$ with $A=\{(v,w),(w,v)~|~\{v,w\}\in E\}.$}
and the compactness of a graph class as the smallest $k$, such that every graph in this 
class has compactness  at most $k$, if such $k$ exists.

In this paper we show that the class of circular-arc graphs has (strict) compactness 2 
by presenting an algorithm to construct a corresponding IRS. A different approach to realize 
space efficient, shortest path routing in circular-arc graphs can be found 
in \cite{f.dragan:new}. In \cite{GP08} an $\mathcal{O}(\log(|V|))$-bit distance
labeling scheme is developed that allows for each pair of vertices in circular-arc graphs to
compute their exact distance in $\mathcal{O}(1)$ time.
In \cite{DYL06} an $\mathcal{O}(\log^2(|V|))$-bit routing labeling scheme is developed that allows 
to make a routing decision in  $\mathcal{O}(1)$ time for every vertex in an arbitrary circular-arc graph. 
The resulting routing path has at most two more edges than the respective shortest path.
Nevertheless, our result is interesting, since interval graphs and 
unit circular-arc graphs  are included in the class of circular-arc graphs and known to have 
compactness 1 \cite{FG98,NS96}, while for the compactness of circular-arc graphs hitherto 
only a lower bound of 2 was known (cf. the graph in Fig. \ref{fig:animals}) 
and the solutions from  \cite{FG98} and \cite{NS96} 
cannot be extended to the class of circular-arc graphs.
   
While interval graphs can be represented by intervals on a line, circular-arc graphs can be 
represented by arcs on a circle, that is to say, every circular-arc graph is the intersection 
graph of a set of arcs on a circle.
The first polynomial-time ($\ema{O}(|V|^3)$) algorithm 
to recognize circular-arc graphs and give a corresponding  circular-arc model can be found 
in \cite{firstCAG}. An algorithm with linear runtime for the same purpose is given 
in \cite{MCC03}. In circular-arc graphs maximum cliques can 
be computed much faster than in arbitrary graphs; an algorithm that determines a maximum 
clique in a circular-arc graph in $\ema{O}(|E|+|V|\cdot\log(|V|))$ time or even in $\ema{O}(|E|)$ time, 
if the circular-arc endpoints are given in sorted order, is presented 
in \cite{Bhattacharya1997336}. A discussion of circular-arc graphs can be found 
in \cite{Hsu95}. Circular-arc graphs have, amongst others, applications 
in cyclic scheduling, compiler design \cite{Gol04,tucker:493}, 
and genetics \cite{Rob76}.

This paper is organized as follows.
In Section \ref{chap:pre} we define cyclic permutations, circular-arc graphs, 
and  interval routing schemes.  
In Section \ref{main} we prove 
the main result of this paper, namely that every circular-arc graph allows a shortest path 
strict 2-interval routing scheme. 
To this end, we choose an arbitrary circular-arc graph $G=(V,A)$ and 
fix a cyclic permutation $\ri{L}$  on $V$, in Section \ref{constrCO}. 
In Section \ref{furtherDefs} we fix an arbitrary vertex $v$ and partition $V-\{v\}$ into the three sets $A_v$, $B_v$, and $C_v$; the vertices in these sets appear consecutively in $\ri{L}$ and are referred to as  the \emph{vertices to the right of $v$}, the \emph{vertices face-to-face with $v$}, and the \emph{vertices to the left of $v$}, respectively.
In Section \ref{2-app} we show that the vertices in the 
three sets  $A_v$, $B_v$, and $C_v$ can be assigned to the edges incident to $v$, such that 
the conditions of a shortest path 
strict 2-interval routing scheme are fulfilled. 
In Section \ref{improving} we show that the upper bound of the number of intervals in a shortest path strict 2-interval routing scheme constructed in this way
is close to the number in an 1-IRS.
In Section \ref{implement} we outline  how to 
implement the implied IRS in $\mathcal{O}(|V|^2)$ time.

\section{Preliminaries\label{chap:pre}}

\subsection{Cyclic Permutations}\label{cyclicOrders}

Cyclic permutations are relevant for the two main subjects of this paper, 
namely circular-arc graphs and non-linear interval routing schemes. 
Just like a permutation, a cyclic permutation implies an order on a set 
with the only difference being that there is no distinct first, second, $\ldots$,
or $n$-th element. We also define cyclic permutations on multisets, where
we assume that multiple appearances of an element are
distinguishable, i.e., 
a multiset $A=\{a_1,\ldots,a_n\}$ on $n$ elements is identified with 
$A'=\{(1,a_1),\ldots,(n,a_n)\}$.


\begin{definition}[Cyclic Permutation]
A \emph{cyclic permutation} on a (multi-)set $A$ with $n$ elements is a function
$\ri{C}\colon A\rightarrow A$
such that  
$\forall a\in A\colon\forall i\in\mathbb{N}^+\colon\ri{C}^i(a)= a ~\Leftrightarrow~ i\bmod n=0$.\footnote{If 
the condition holds for one element in $A$, it holds for all elements in $A$.}
We call $\ri{C}(a)$ the \emph{successor of $a$ (in $\ri{C}$)} and we write
$a\succ_\ri{C} \ri{C}(a)$.
\end{definition}

Every cyclic permutation is a bijective mapping.
Let $\ri{C}$ be a cyclic permutation on a (multi-)set $A=\{a_1,a_2,\ldots,a_n\}$.
One could imagine $\ri{C}$ as an arrangement of $A$ on a clock face, for an $n$-hour clock.
\emph{To go through $\ri{C}$ beginning at $a_i\in A$} means to consider the elements in 
$A$ as they appear in $\ri{C}$, beginning at $a_i$. Next we define subsets of $A$ that appear 
consecutively in $\ri{C}$.

\begin{definition}[Ring-Interval, Ring-Sequence
]\label{interval}
For a cyclic permutation $\ri{C}$ on a (multi-)set $A$ and $a,b\in A$, we recursively define 
the \emph{ring-interval from $a$ to $b$ (in $\ri{C}$)} as
$$[a,b]_\ri{C}\colonequals 
\begin{cases}
\{a\}&\text{if }a=b\\
\{a\}\cup[\ri{C}(a),b]_\ri{C}&\text{else.}\\
\end{cases}
$$
By $]a,b]_\ri{C}$ we indicate that the left endpoint is excluded, by  $[a,b[_\ri{C}$ that the right endpoint is
excluded, and by $]a,b[_\ri{C}$ that both endpoints  are excluded.
Therefore we have $]a,a],[a,a[,]a,a[=\emptyset$

The \emph{ring-sequence from $a$ to $b$ (in $\ri{C}$)}, denoted by 
$\cordered[a,b]_\ri{C}=(a,\ldots,b)$, is the order in which elements of $[a,b]_\ri{C}$ appear when going through $\ri{C}$ beginning at $a$.


\end{definition}


\begin{construction}\label{ringkombi}
Let $\ri{C}$ be a cyclic permutation on a (multi-)set $A$ and $a,b,c,d\in A$.
If $[a,b]_\ri{C}\cap[c,d]_\ri{C}=\emptyset$ and $b\succ_\ri{C}c$, we 
have $[a,b]_\ri{C}\cup[c,d]_\ri{C}=[a,d]_\ri{C}$.
\end{construction}

\subsection{Circular-Arc Graphs}\label{sub:CAG}

We assume that the reader is familiar with basic graph theoretical definitions.
The \emph{intersection graph} of a family of sets is the graph where the vertices are the sets, 
and the edges are the pairs of sets that intersect. Every graph is the intersection graph of 
some family of sets. A graph is an \emph{interval graph} if it is the intersection graph of a 
finite set of intervals (line segments) on a line and a \emph{unit interval graph} if these 
intervals have unit length.
A graph $G$ is a \emph{circular-arc graph} if it is the intersection graph of a finite set of 
arcs on a circle; the latter we call an \emph{arc model of $G$}.
Since an interval graph is a special case of a circular-arc graph, namely a circular-arc graph 
that can be represented with a set of arcs that do not cover the entire circle, we define the 
set of \emph{strict circular-arc graphs} as the set of circular-arc graphs that are not interval 
graphs. Every strict circular-arc graph is connected.
A \emph{unit circular-arc graph} is a circular-arc graph that has an arc model in which the 
arcs have unit length. For a survey on circular-arc graphs see \cite{Lin20095618} 
and for the definition of further special graph classes see 
\cite{BLS99}.
Fig. \ref{fig:animals} illustrates a strict circular-arc graph.

\begin{figure}[ht]
  \centering
  \subfloat[Arc model]{\label{fig:gull}\includegraphics[width=0.33\textwidth,trim=3cm 4cm 3cm 4cm]{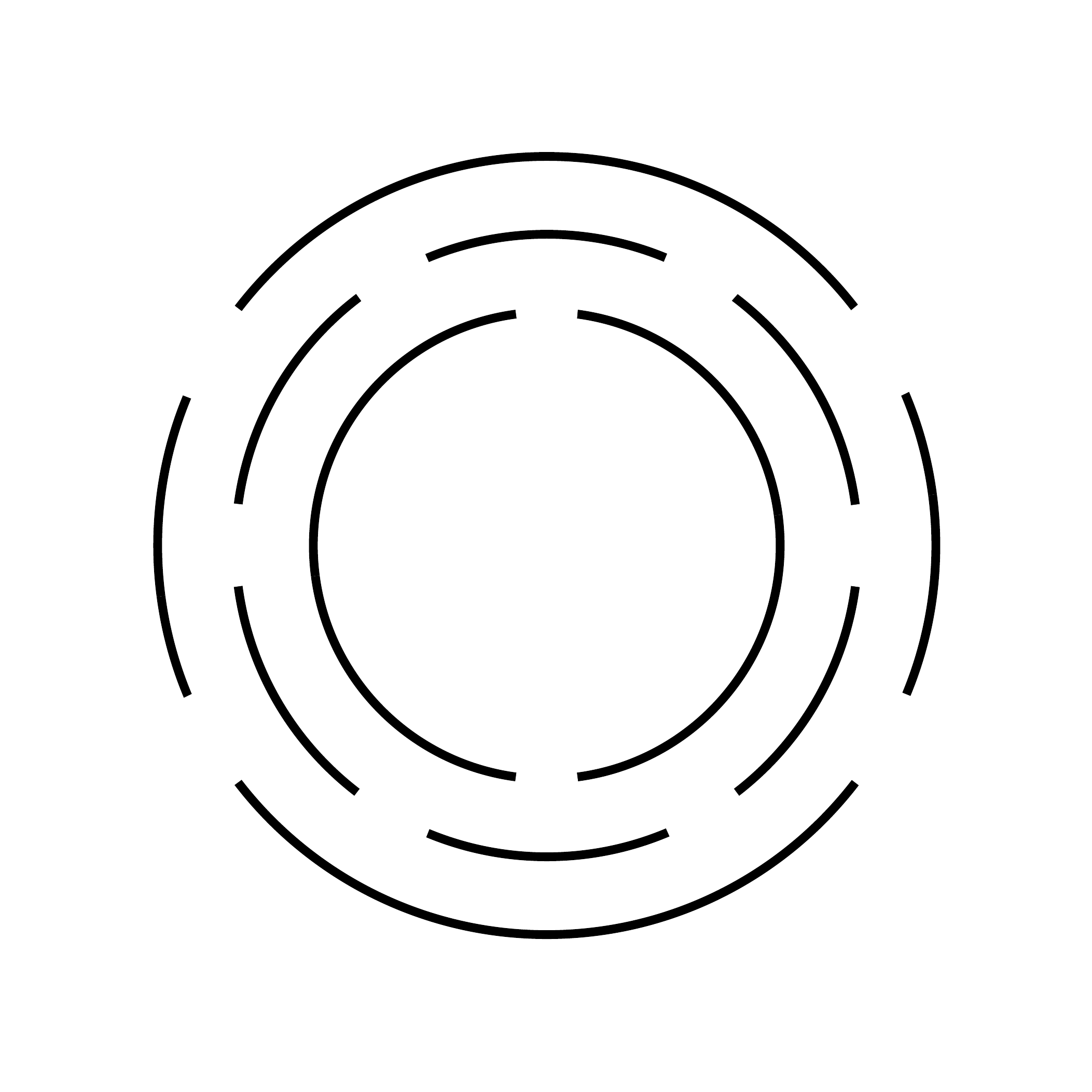}}                
  \subfloat[Correspondence]{\includegraphics[width=0.33\textwidth,trim=3cm 4cm 3cm 4cm]{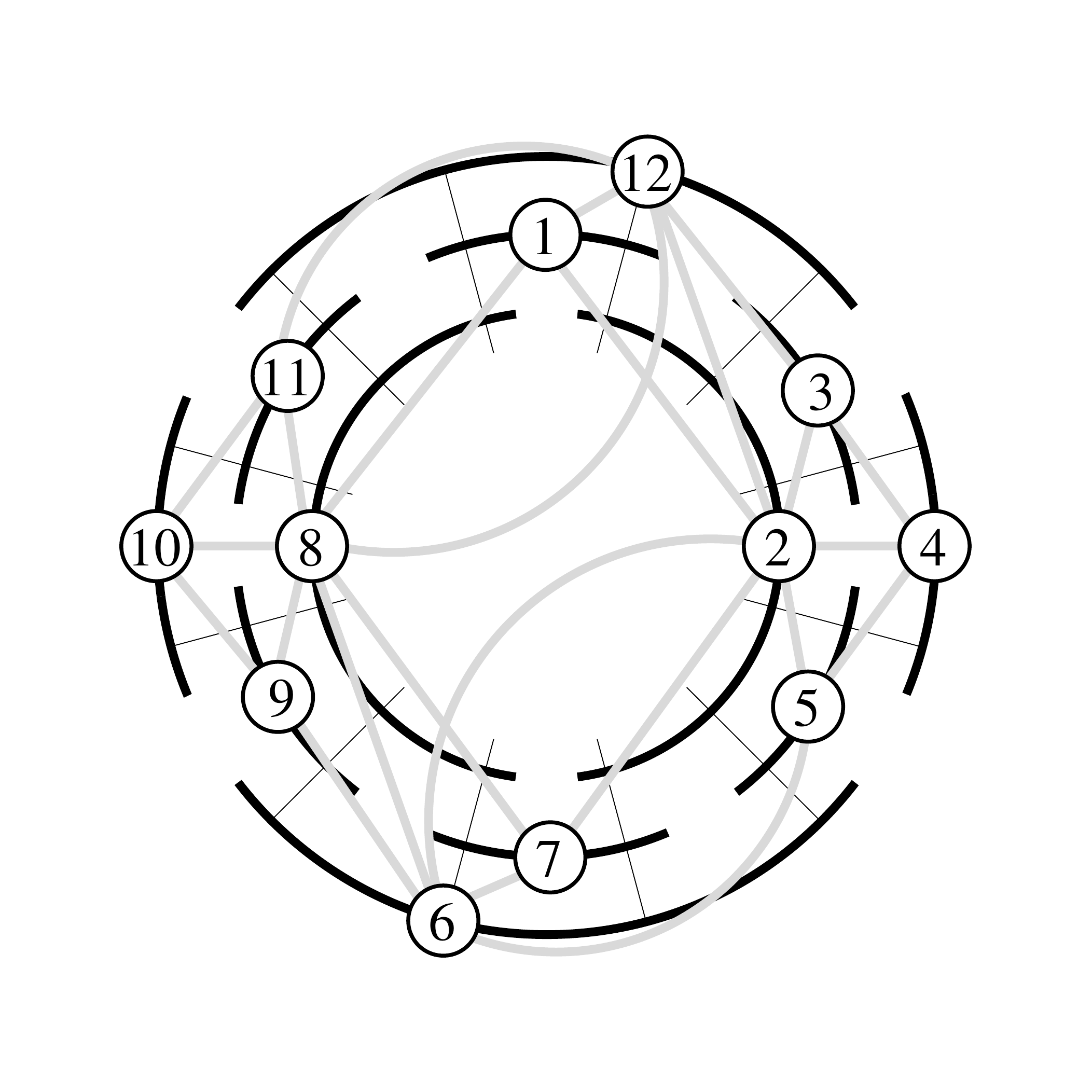}}
  \subfloat[Graph representation]{\label{fig:mouse}\includegraphics[width=0.33\textwidth,trim=3cm 4cm 3cm 4cm]{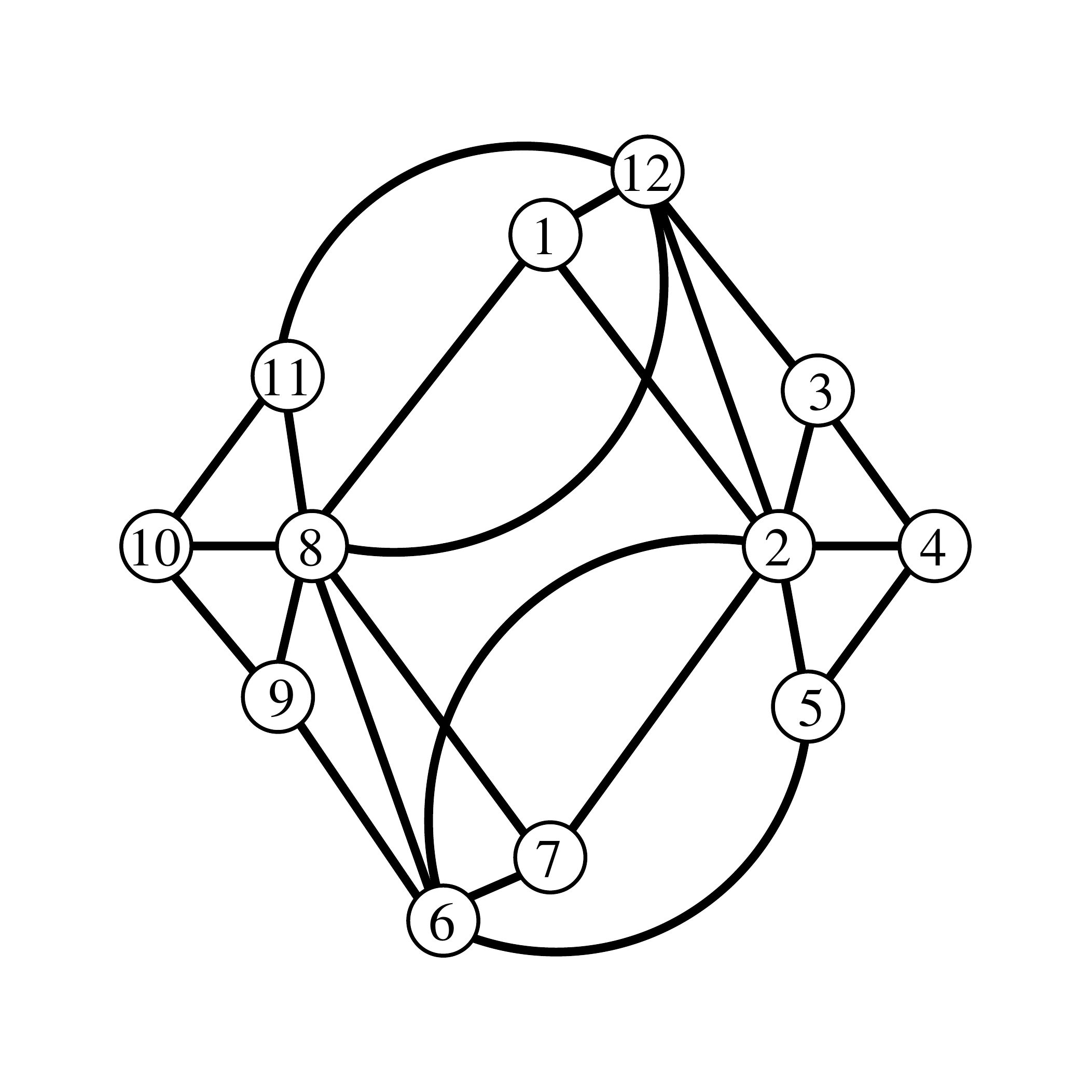}}
   \vspace{-5pt}\caption{A (strict) circular-arc graph}
  \label{fig:animals}
\end{figure}

A difference between interval graphs and circular-arc graphs, that is worth mentioning, 
is that the maximal cliques of interval graphs can be associated to points of 
the \q{interval model} and therefore an interval graph can have no more maximal cliques 
than vertices. In contrast, circular-arc graphs may contain maximal cliques that do not 
correspond to points of  some arc model \cite{Lin20095618}.
In fact, just as in arbitrary graphs, the number of maximal cliques in circular-arc 
graphs can grow exponentially in the size of the graph \cite{firstCAG}. For
the special case that there is an arc model where no three arcs cover the whole
circle the number of maximal cliques is bounded by the
number of vertices of the graph \cite{CFZ08}. Also maximal cliques may
occur several times within some arc model, which is addressed by multisets in 
this paper.

The following definition and the subsequent corollary formalize important 
graph-theoretic properties of circular-arc graphs.

\begin{definition}[Clique-Cycle, Left Clique $\lc{v}$, and Right Clique $\rc{v}$]\label{lcrc}
Let $G=(V,E)$ be a circular-arc graph, $M_G$ be an arc model of $G$, and $C$ be the 
circle of $M_G$.
To each point $p$ on $C$ corresponds a clique that contains the vertices 
whose corresponding arcs contain $p$.
Let multiset $\mc{Y}$ contain all cliques that correspond 
to points on $C$ and define $\mc{X}\subseteq\mc{Y}$ as the 
multiset of cliques we obtain, when we remove all cliques from $\mc{Y}$ that 
are not maximal with respect to inclusion among the cliques in $\mc{Y}$  and all 
but one cliques that are equal.
Pair $(\mc{X},\ri{C})$, where $\ri{C}$ is the cyclic permutation on $\mc{X}$ 
that is implied by ordering the elements in $\mc{X}$ as their corresponding points appear 
(clockwise) on $C$, is called a \emph{clique-cycle} for $G$. 
For every vertex $v\in V$, there
exist two cliques $\mc{a},\mc{b}\in\mc{X}$ such that a 
clique $\mc{x}\in\mc{X}$ contains $v$ if and 
only if $\mc{x}\in[\mc{a},\mc{b}]_\ri{C}$.
We call $\mc{a}$ the \emph{left clique of $v$}, denoted by $\lc{v}$, and $\mc{b}$ 
the \emph{right clique of $v$}, denoted by $\rc{v}$.
\end{definition}

When circular-arc graphs are discussed in this paper, the argumentation is based 
on one of their clique-cycles deduced from some arc model.

A frequently used notation in our paper is that of a {\em dominating vertex}, which
is a vertex adjacent to all other vertices of the graph.  The set of all
dominating vertices of graph $G$ is denoted by $\dom$.
Since a dominating vertex $d$ is adjacent to all the vertices, 
all the maximal cliques contain it, wherefore $d$'s left and right clique form a cyclic interval that covers the entire clique cycle.

\begin{corollary}\label{remarkDoms}
Let $(\mc{X},\ri{C})$ be a clique-cycle for a circular-arc graph $G=(V,E)$ 
and $d\in \dom$ a dominating vertex.
The left clique $\lc{d}$ and the right clique $\rc{d}$ of $d$ are not unique, 
but can be chosen arbitrarily with 
$\rc{d}\succ_\ri{C}\lc{d}$ being the only constraint.
\end{corollary}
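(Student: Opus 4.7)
The plan is to unwind Definition \ref{lcrc} for a dominating vertex in two short steps. Since $d$ is adjacent to every other vertex, I may (without changing any adjacency) replace the arc of $d$ in the arc model by the full circle, which forces $d$ into every clique $\mc{x}\in\mc{X}$. The biconditional in Definition \ref{lcrc} then collapses to the single requirement $[\lc{d},\rc{d}]_\ri{C}=\mc{X}$.

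Next, I would read off from Definition \ref{interval} that $[\mc{a},\mc{b}]_\ri{C}=\{\mc{a},\ri{C}(\mc{a}),\ri{C}^2(\mc{a}),\ldots,\mc{b}\}$, with the $\ri{C}$-iterates taken until $\mc{b}$ is first reached. So this ring-interval equals all of $\mc{X}$ exactly when $\mc{b}$ is the $\ri{C}$-predecessor of $\mc{a}$, i.e.\ when $\ri{C}(\mc{b})=\mc{a}$, which by definition of $\succ_\ri{C}$ means $\mc{b}\succ_\ri{C}\mc{a}$. Any of the $|\mc{X}|$ pairs satisfying this constraint is thus a legitimate choice of $(\lc{d},\rc{d})$, establishing both the non-uniqueness and the fact that $\rc{d}\succ_\ri{C}\lc{d}$ is the only restriction.

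For the contrast with a non-dominating $v$ that has at least one neighbor, the set $\{\mc{x}\in\mc{X}:v\in\mc{x}\}$ is a \emph{proper} non-empty subset of $\mc{X}$. Definition \ref{lcrc} expresses this subset as a ring-interval $[\lc{v},\rc{v}]_\ri{C}$, and properness forces $\lc{v}$ and $\rc{v}$ to be the unique boundary elements (those whose $\ri{C}$-predecessor, respectively $\ri{C}$-successor, lies outside the subset). The pair is therefore uniquely determined, sharpening the contrast with the dominating case.

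No serious difficulty is anticipated: the corollary is essentially a direct reading of the recursive ring-interval definition. The one step worth highlighting is the initial reduction that uses the freedom to enlarge $d$'s arc to cover all of $C$; this is what guarantees $d\in\mc{x}$ for every $\mc{x}\in\mc{X}$ and lets the biconditional of Definition \ref{lcrc} collapse to a statement purely about ring-intervals.
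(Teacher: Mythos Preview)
Your argument has a genuine gap at the very first step. Replacing the arc of $d$ in $M_G$ by the full circle does preserve all adjacencies, but it can alter the clique-cycle $(\mc{X},\ri{C})$ that the corollary is stated for. A point $p$ lying outside $d$'s original arc yields a point-clique not containing $d$; after enlargement that clique acquires $d$ and may now strictly contain cliques that were formerly maximal, so $\mc{X}$ itself changes. Concretely, realize $K_3$ by three pairwise-overlapping arcs, none covering the whole circle. Then $|\mc{X}|=3$, every vertex is dominating, yet each vertex lies in exactly two of the three cliques of $\mc{X}$, so its left and right clique are \emph{uniquely} determined by this particular $(\mc{X},\ri{C})$. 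If you enlarge one arc to the full circle, $\mc{X}$ collapses to a single clique---an entirely different clique-cycle.

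This example shows both that your model-modification step does not preserve $(\mc{X},\ri{C})$ and that the corollary, read literally for an arbitrary fixed clique-cycle, need not hold. The paper gives no proof and invokes the statement only to justify Line~1 of Listing~\ref{pseudoLabeling}, where $\lc{d}$ and $\rc{d}$ are simply \emph{redefined} for dominating vertices. The intended content is that one is free to make this reassignment---equivalently, to pass to an arc model in which every dominating vertex is drawn as the full circle \emph{before} forming the clique-cycle---without harming the rest of the construction. Your instinct to enlarge $d$'s arc is exactly the right one; the fix is to frame it as a normalization of the model prior to fixing $(\mc{X},\ri{C})$, not as a claim about a clique-cycle already given.
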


Consider the arc model of a circular-arc graph and two intersecting arcs $v,w$.
The arcs $v$ and $w$ can intersect in the following two ways:
\begin{inparaenum}
\item
The intersection of $v$ and $w$ constitutes another arc.
\begin{inparaenum}
This is the case,
\item
if $v$ is included in $w$ or vice versa,
\item
if $v$ and $w$ are congruent, 
i.e., cover the exact same part of the cycle, or 
\item
if exactly one endpoint of $v$ lies in $w$ and vice versa.
\end{inparaenum}
\item
The intersection of $v$ and $w$ constitutes two arcs (the corresponding vertices will 
be defined as counter vertices in the subsequent definition).
This is the case, if $v$ and $w$ jointly cover the circle and meet at both ends.
\end{inparaenum}

\begin{definition}[Counter Vertex $\co{v}$]\label{reaches}
Let $(\mc{X},\ri{C})$ be a clique-cycle for a circular-arc graph $G=(V,E)$.
We call $w\in V$ a \emph{counter vertex} of $v\in V$, if $\lc{v},\rc{v}\in[\lc{w},\rc{w}]_\ri{C}$ and $\lc{w},\rc{w}\in[\lc{v},\rc{v}]_\ri{C}$ and $\lc{v}\neq{\lc{w}}$ (or, equivalently, $\rc{v}\neq{\rc{w}}$).
We call $(v,w)$ a \emph{pair of counter vertices} and denote the set of all counter 
vertices of $v$ by $\co{v}$.\footnote{Note that, $w\in\co{v}~\Leftrightarrow~v\in\co{w}$.}
\end{definition}

The next corollary follows directly from the arc model of a circular-arc graph.

\begin{corollary}\label{adjacentToVorCoV}
Let $G=(V,E)$ be a circular-arc graph and $v\in V$ be a vertex 
with $\co{v}\neq\emptyset$. Then every vertex in $V$ is adjacent 
to $v$ or to every vertex in $\co{v}$, in other words, $v$ and 
any of its counter vertices constitute a dominating set.
\end{corollary}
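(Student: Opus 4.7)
The plan is to work in the clique-cycle $(\mc{X},\ri{C})$ for $G$. By Definition \ref{lcrc}, any vertex $x\in V$ with at least one neighbor has an associated ring-interval $[\lc{x},\rc{x}]_\ri{C}$ consisting of exactly the cliques of $\mc{X}$ containing $x$, and two such vertices are adjacent iff their ring-intervals share a common clique. The claim thus reduces to: whenever $u\in V$ is not adjacent to $v$, the ring-interval of $u$ is contained in $[\lc{w},\rc{w}]_\ri{C}$ for every $w\in\co{v}$.

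The key step, and the main obstacle, is the geometric lemma that $[\lc{v},\rc{v}]_\ri{C}\cup[\lc{w},\rc{w}]_\ri{C}=\mc{X}$ for every $w\in\co{v}$. By Definition \ref{reaches}, the intersection $[\lc{v},\rc{v}]_\ri{C}\cap[\lc{w},\rc{w}]_\ri{C}=\{\mc{c}\in\mc{X}~|~v\in\mc{c}\wedge w\in\mc{c}\}$ is nonempty but is not itself a ring-interval. I would establish the following dichotomy for two ring-intervals $I_1,I_2$ on a cyclic order: if $I_1\cup I_2\neq\mc{X}$, one may cut the cycle at a clique outside $I_1\cup I_2$ so that $I_1$ and $I_2$ become ordinary intervals on a linear order; their intersection is then a (possibly empty) single interval, corresponding again to a ring-interval. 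Contrapositively, two ring-intervals whose intersection is nonempty but not a ring-interval must together cover all of $\mc{X}$, which is precisely the lemma.

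To conclude, assume $u\in V$ is not adjacent to $v$ and fix an arbitrary $w\in\co{v}$. If $u$ has at least one neighbor, then $[\lc{u},\rc{u}]_\ri{C}$ is defined and, since $u$ and $v$ share no clique, is disjoint from $[\lc{v},\rc{v}]_\ri{C}$; by the lemma it must therefore lie inside $[\lc{w},\rc{w}]_\ri{C}$, so $u$ and $w$ share a clique and are adjacent. If instead $u$ were isolated, then $\{u\}$ would itself be a maximal clique in $\mc{X}$ lying in neither $[\lc{v},\rc{v}]_\ri{C}$ nor $[\lc{w},\rc{w}]_\ri{C}$, again contradicting the lemma; hence this case cannot occur. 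Either way, $u$ is adjacent to every vertex in $\co{v}$, as required.
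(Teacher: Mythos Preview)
Your proof is correct. The paper itself gives no proof beyond the sentence ``follows directly from the arc model of a circular-arc graph,'' so your argument via the clique-cycle is a legitimate formalization of that one-line appeal to the picture: the covering lemma $[\lc{v},\rc{v}]_\ri{C}\cup[\lc{w},\rc{w}]_\ri{C}=\mc{X}$ is exactly the combinatorial content of the observation that two arcs whose intersection consists of two pieces must together wrap around the whole circle, and your linearization argument for that lemma (cut at a point outside both intervals and reduce to the fact that two linear intervals meet in at most one interval) is clean and correct. Your separate handling of a hypothetical isolated vertex is also right---such a vertex would contribute a singleton clique to $\mc{X}$ missed by both ring-intervals, contradicting the covering lemma---so the case is indeed vacuous.
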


\begin{definition}[Reaching to the Left/Right]\label{defi:reaches}
Let $(\mc{X},\ri{C})$ be a clique-cycle for a circular-arc graph $G=(V,E)$, 
$v,w$ two adjacent vertices with $v\notin\co{w}$ and $\mc{x}\in \mc{X}$ a clique with $v,w\in \mc{x}$.
We say, \emph{$w$ reaches at least as 
far to the right as $v$} if $[\mc{x},\rc{v}]_\ri{C}\subseteq [\mc{x},\rc{w}]_\ri{C}$, 
and \emph{$w$ reaches further to the right than $v$} 
if $[\mc{x},\rc{v}]_\ri{C}\subset [\mc{x},\rc{w}]_\ri{C}$.
Analogously, we define \emph{$w$ reaches at least as 
far to the left as $v$} by $[\lc{v},\mc{x}]_\ri{C}\subseteq [\lc{w},\mc{x}]_\ri{C}$ 
and \emph{$w$ reaches further to the left than $v$} 
by $[\lc{v},\mc{x}]_\ri{C}\subset [\lc{w},\mc{x}]_\ri{C}$.
\end{definition}

Although the arcs corresponding to counter vertices 
cover very different areas of the circle, it is impossible to say 
which of the two arcs \q{reaches further to the left or right}, when the point of 
view is the middle of the circle.

\subsection{Interval Routing Schemes}\label{IRS}

We assume that the reader is familiar with basic concepts of interval routing and refer to \cite{Gavoille00asurvey} for an exhaustive introduction and survey.
If an IRS assigns at most $k$ intervals to each (directed) edge and only implies shortest paths, we denote it by $k$-IRS.
If this $k$-IRS is furthermore strict, which means that for every vertex $v\in V$ no interval assigned to the outgoing edges of $v$ contains $v$'s number, it is denoted by $k$-SIRS.
Definition \ref{fe} is needed for Definition \ref{defiIRS}, which is an alternative definition for shortest path strict interval routing schemes that is equivalent to definitions in literature but better suited for our purposes.
To apply Definition \ref{defiIRS} to an undirected graph (as for example a circular-arc graph), the graph is converted to its directed, symmetric version.
While directed edges are often called arcs, we choose the former term to avoid confusion with the arcs of a graph's arc model.

\begin{definition}[First Vertex $\fe{u}{w}$]\label{fe}
Let $G=(V,A)$ be a directed graph and $u,v,w\in V$.
Vertex $v\in V$ is a \emph{first vertex from $u$ to $w$}, if there exists a shortest directed path 
$(u,v,\ldots, w)$, i.e., there is a shortest directed path from $u$ to $w$, that first traverses $v$.
The set of first vertices from $u$ to $w$ is denoted by 
$$\fe{u}{w}=\{v\in V~|~v\text{ is a first vertex from }u\text{ to }w\}.$$
\end{definition}

\begin{definition}[Shortest Path Strict Interval Routing Scheme]\label{defiIRS}
~ Let $G=(V,E)$ be a directed graph.
A \emph{shortest path strict interval routing scheme for $G$} 
is a pair $R=(\ri{L},\ema{I})$, where $\ri{L}$ is a cyclic permutation on $V$, 
called the \emph{vertex order}, and $\ema{I}$, called the \emph{directed-edge-labeling}, maps 
every directed edge to a set of ring-intervals in $\ri{L}$ such that for every vertex $v\in V$,
\begin{enumerate}

\item 
$\ema{I}$ maps the outgoing directed edges of $v$ to a set of ring-intervals 
in $\ri{L}$, such that the intervals
assigned to different directed edges never intersect,

\item
for every vertex $u\neq v$  one of these these ring-intervals 
contains $u$ (vertex $v$ must not appear in one of these intervals), and

\item
if $u$ is contained in a ring-interval in $\ema{I}((v,w))$, then $w$ is a first vertex 
from $v$ to $u$.
\end{enumerate}

Let $\ri{F}$ be a vertex order.
We say that  \emph{$v\in V$ given $\ri{F}$ suffices a shortest path $k$-SIRS}, if  
there exists a directed edge-labeling for $G$ that maps every outgoing directed edge of $v$ to at most 
$k$ ring-intervals in $\ri{F}$, such that $v$ satisfies the three constraints above.
\end{definition}


\begin{corollary}\label{coro:neu}
A directed graph $G=(V,A)$ supports a 
shortest path $k$-SIRS, if a vertex order $\ri{F}$ exists, such that every 
vertex $v\in V$  given $\ri{F}$ suffices a shortest path $k$-SIRS.
\end{corollary}

%


\section{Main result\label{main}}

The compactness of the class of circular-arc graphs has not been determined until now, 
which is somehow surprising, since the class of circular-arc graphs is closely related 
to the class of interval graphs and unit circular-arc graphs and both classes are known 
to have strict compactness $1$ \cite{NS96,FG98}. 
However, for the compactness of the class of circular-arc graphs only a lower bound of 2 was known.
In particular, circular-arc graphs exist that do not allow for optimal 1 interval routing schemes, as, for example, a wheel graph, i.e., a cycle together with a dominating vertex, with six outer vertices \cite{FG98}.
Also the circular-arc graph shown in Fig. \ref{fig:animals} does not allow optimal 1 interval routing schemes.
In this section we prove the main result of this paper, which is given by the following theorem and shows that the lower bound of 2 for the compactness of circular-arc graphs is indeed sharp.

\begin{theorem}[Main Theorem]\label{mainTheo}
The class of circular-arc graphs has strict compactness 2.
\end{theorem}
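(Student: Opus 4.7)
The plan is to apply the Local Criterion (Corollary~\ref{tropicthunder}): it suffices to exhibit one cyclic order $\ri{L}$ on $V$ such that every vertex $v\in V$ suffices a shortest path $2$-SIRS given $\ri{L}$. I would start from an arbitrary arc model of the given circular-arc graph $G=(V,E)$ and the associated clique-cycle $(\mc{X},\ri{C})$ of Definition~\ref{lcrc}. The order $\ri{L}$ is defined by walking clockwise through $\ri{C}$ and, as each maximal clique is visited, appending the vertices whose left clique equals that clique (breaking ties by the position of the right clique in $\ri{C}$). This way, vertices whose arcs are close on the circle appear close in $\ri{L}$, so geometrically coherent subsets of $V-\{v\}$ become ring-intervals of $\ri{L}$.

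Next, fix $v\in V$. Using $\lc{v}$, $\rc{v}$ and $\ri{L}$ I would partition $V-\{v\}$ into three consecutive ring-intervals $A_v$, $B_v$, $C_v$, so that $A_v\cup B_v$ is the open neighbourhood of $v$, split by whether a neighbour reaches further to the left or further to the right of $v$ in the sense of Definition~\ref{defi:reaches}, and $C_v$ is the set of non-neighbours of $v$. Counter vertices do not sit cleanly on one side (Definition~\ref{reaches}), but Corollary~\ref{adjacentToVorCoV} forces any such vertex together with $v$ to dominate $V$, turning the counter-vertex case into a boundary situation that can be handled directly.

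To label the outgoing arcs of $v$ I would single out two privileged neighbours: $w_L$ reaching furthest to the left and $w_R$ reaching furthest to the right of $v$, exercising the freedom of Corollary~\ref{remarkDoms} if $v$ is dominating. Destinations in $A_v$ together with the half of $C_v$ lying counter-clockwise from $v$ are routed through $w_L$; the symmetric half together with $B_v$ is routed through $w_R$. Because of the construction of $\ri{L}$ the neighbour and non-neighbour halves of these sets are adjacent in $\ri{L}$, so each of the two arcs $(v,w_L)$ and $(v,w_R)$ receives a single ring-interval. Every remaining neighbour $w$ of $v$ lies between $w_L$ and $w_R$ in $\ri{L}$ and only has to cover the destinations for which $w$ is genuinely a first vertex from $v$; that set is the union of at most two ring-intervals in $\ri{L}$, one coming from a subinterval of $A_v\cup B_v$ and at most one arising from the counter-vertex boundary case.

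The main obstacle will be the shortest-path verification: proving that for every $u$ in the ring-interval assigned to an arc $(v,w)$, $w$ really is a first vertex from $v$ to $u$ in the sense of Definition~\ref{fe}, equivalently $\dist{v}{u}=1+\dist{w}{u}$. This is where the ``reaches furthest'' choice of $w_L,w_R$ pays off: $w_L$ is adjacent to every non-neighbour of $v$ that is reachable in two hops through some left-side neighbour of $v$, and symmetrically for $w_R$. The delicate cases are pairs of counter vertices (where the clique-cycle wraps around and Corollary~\ref{adjacentToVorCoV} must be invoked to route through a dominating pair) and ties in ``reaches further'', where a consistent tie-breaking rule across the three sets $A_v,B_v,C_v$ is needed to keep the ring-intervals at $v$ pairwise disjoint. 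Once these cases are dispatched for each $v$, Corollary~\ref{tropicthunder} assembles the per-vertex constructions into the desired shortest path $2$-SIRS for $G$.
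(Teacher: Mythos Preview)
Your overall plan---apply Corollary~\ref{tropicthunder} with the cyclic order $\ri{L}$ obtained by scanning $\ri{C}$ and listing vertices by left clique, ties broken by right clique---matches the paper exactly. Two steps in the execution, however, do not go through.

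First, the partition you propose is impossible in this order. You ask that $A_v,B_v,C_v$ be three consecutive ring-intervals in $\ri{L}$ with $A_v\cup B_v=N(v)$, which in particular forces the left-reaching neighbours of $v$ to form a ring-interval. They do not: among the vertices whose left clique lies outside $[\lc{v},\rc{v}]_\ri{C}$, neighbours of $v$ (those whose right clique reaches $\lc{v}$) and non-neighbours interleave in $\ri{L}$, because within each left-clique block the right-clique tiebreak places the non-neighbours before the neighbours, and this pattern repeats block after block. The paper's $A_v,B_v,C_v$ are accordingly defined differently---its $C_v$ is \emph{deliberately} a mixture of neighbours and non-neighbours, distributed via Theorem~\ref{theototheleft} by giving each neighbour $v_i\in C_v$ the run $[v_i,\ri{L}^{-1}(v_{i+1})]_\ri{L}$ up to the next neighbour. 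A related slip: you route the neighbours in your $A_v$ through $w_L$, but for a neighbour $w$ the only first vertex from $v$ to $w$ is $w$ itself, so $w$ must lie in $\ema{I}_v((v,w))$, not in $\ema{I}_v((v,w_L))$.

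Second, your shortest-path verification for non-neighbours is only argued at distance~$2$. For $u$ with $\dist{v}{u}\ge 3$ you still need that $w_L$ or $w_R$ lies on a shortest $v$--$u$ path \emph{and} that this choice cuts the non-neighbour block into at most two ring-intervals. This is the substantive part of the paper's proof: Theorem~\ref{b} shows that $\lvv$ or $\rvv$ is always a first vertex depending on the comparison of $\lvd{u}$ and $\rvd{u}$, and Theorem~\ref{separator} then produces, via the apex number of Definition~\ref{apex}, a single cut vertex $\svv$ so that $[\ri{L}(\mvv),\svv]_\ri{L}$ can go to $(v,\rvv)$ and $[\ri{L}(\svv),\ri{L}^{-1}(\lvv)]_\ri{L}$ to $(v,\lvv)$. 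Your ``half of $C_v$ lying counter-clockwise from $v$'' has no analogue of this cut point, so neither the shortest-path property nor the two-interval bound is established.
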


Since every non-strict circular-arc graph $G$ is an interval graph and therefore has strict compactness 1 \cite{NS96},  we only need to consider strict circular-arc graphs in the proof.
For the rest of this section let $(\mc{X},\ri{C})$ be a clique-cycle for an 
arbitrary strict circular-arc graph $G_u=(V,E)$ and $G=(V,A)$ be the directed symmetric 
version of $G_u$. 

\paragraph*{Outline of the proof}
To show that there exists a shortest path $2$-SIRS for $G$ and thus 
Theorem \ref{mainTheo} holds, we proceed as follows. 
In Section \ref{constrCO} we construct 
a cyclic permutation $\ri{L}$ on $V$.
In Section \ref{furtherDefs} we choose an arbitrary vertex $v\in V$ and partition the 
vertices in $V-\{v\}$ in three disjoint ring-intervals $A_v,B_v,C_v$ in $\ri{L}$ in 
dependency on $v$. 
In order to show that $v$ given $\ri{L}$ suffices a shortest path 2-SIRS, 
we show how to define an directed edge-labeling $\ema{I}_v$ for the outgoing directed edges of $v$ that 
satisfies the corresponding  constraints given in Definition \ref{defiIRS}. We
consider each of the three ring-intervals $A_v,B_v,C_v$ in 
Section  \ref{lvbisv} and \ref{mvbislv} and 
show how the vertices in the respective ring-interval can be sufficiently  mapped to 
by $\ema{I}_v$. 
Since we have chosen $v$ arbitrarily, it follows that every vertex in $V$ 
suffices a shortest path $2$-SIRS given $\ri{L}$ and therefore, by 
Corollary \ref{coro:neu}, $G$ supports a shortest path $2$-SIRS.
Since $G_u$ is an arbitrary strict circular-arc graph, this proves 
Theorem \ref{mainTheo}.

The following notations are straight forward but may be formalized for convenience.
Let $(v,w)\in E$ be a directed edge with $\ema{I}_v((v,w))= S$, where $S$ is a set of ring-intervals 
in $\ri{L}$, and $R$ be a ring-interval in $\ri{L}$. \emph{Assigning $R$ to directed edge $(v,w)$} 
means to define $\ema{I}_v\colon(v,w)\mapsto S\cup \{R\}$.
Of course, when we start constructing $\ema{I}_v$, every directed edge is mapped to the empty set.
Let $[a,b]_\ri{L}$ and $[c,d]_\ri{L}$ be two ring-intervals that can be joined to one 
ring-interval $[a,d]_\ri{L}$ by Construction \ref{ringkombi}. If $[a,b]_\ri{L}$ and 
$[c,d]_\ri{L}$ are assigned to the same directed edge $(v,w)$, we can redefine
$$\ema{I}_v\colon(v,w)\mapsto\{[a,b]_\ri{L},[c,d]_\ri{L}\} \text{ by }
\ema{I}_v\colon(v,w)\mapsto\{[a,d]_\ri{L}\}$$
and therefore save one ring-interval on directed edge $(v,w)$.
This redefinition is called \emph{compressing the (two) ring-intervals on directed edge $(v,w)$ 
(to one ring-interval)}. Let $V'\subseteq V$ be a set of vertices. 
\emph{To distribute $V'$ (over the outgoing directed edges of $v$)} means to partition 
$V'$ into ring-intervals and assign these to outgoing directed edges of $v$ (sufficiently).
For convenience,
we sometimes refer to ring-intervals as \emph{intervals}.

\subsection{Definition of the Vertex Order}\label{constrCO}


In this section we consider a circular-arc graph $G=(V,E)$ 
together with a clique-cycle $(\mc{X},\ri{C})$ and show
how to construct a cyclic permutation $\ri{L}$ on $V$ 
that serves as the given vertex order. The ordering is obtained by
sorting the arcs using their left cliques as primary sort key
and right cliques as secondary sort key, making the first
arc the successor of the last. A pseudocode for this 
purpose is presented in Listing \ref{fig:animals} and its idea is explained 
next.

By Corollary \ref{remarkDoms} the left and right clique of a dominating vertex are not unique.
Since it simplifies the proof, if all dominating vertices have the same left and thus also 
the same right clique, these cliques are unified in Line 1.
For a fixed $\mc{z}$ in Line 1, the generated cyclic permutation is fully deterministic, 
except for the ordering of true twins.\footnote{Two vertices in a graph are called 
\emph{true twins} if they are adjacent to the same set of vertices and to each other.}
The dummy vertex introduced in Line 4 is needed to close the cyclic order once all vertices 
are integrated in $\ri{L}$.
The loop in Line 6 runs once through all cliques in $\mc{X}$ (beginning at $\mc{f}$, that 
is arbitrarily chosen in Line 2) in the order defined by $\ri{C}$ (Line 13).
For every visited clique $\mc{n}$ the loop in Line 8 integrates every vertex, whose left 
clique is $\mc{n}$, in $\ri{L}$.
By Line 9, vertices with the same left clique are ordered with respect to 
their right clique. An example for the defined vertex ordering can be found in 
Fig. \ref{fig:animals}.

\begin{lstlisting}[caption={Definition of a cyclic order $\ri{L}$ on $V$ that serves as the given vertex order.},captionpos=b,label=pseudoLabeling]
Fix $\mc{z}\in \mc{X}$ arbitrarily and choose $\lc{d}\colonequals \ri{C}(\mc{z})$ and $\rc{d}\colonequals \mc{z}$ for all $d\in \dom$
Choose $\mc{f}\in \mc{X}$ arbitrarily
$\mc{n}\colonequals \mc{f}$
$V\colonequals V\cup \{t\}$    /* add dummy vertex $t$ */
$p\colonequals t$
do
      $A\colonequals \{v\in V~|~\lc{v}=\mc{n}\}$
      while($A\neq\emptyset$)
	    Choose $a\in A$ such that every vertex in $A$ reaches at least as far to the right as $a$.
	    $\ri{L}(p)\colonequals a$
	    $p\colonequals a$
	    $A\colonequals A-\{a\}$
      $\mc{n}\colonequals \ri{C}(\mc{n})$
while($\mc{n}\neq\mc{f}$) 
$\ri{L}(p)\colonequals\ri{L}(t)$
$V=V-\{t\}$     /* remove dummy vertex $t$ */
return $\ri{L}$
\end{lstlisting}


The following observations are crucial for the rest of this paper.


\begin{remark}
\begin{enumerate}
\item\label{item:leftconsecutiv}
The vertices  are ordered primarily by their left clique, that is to say, vertices having 
the same left clique appear consecutively in $ \ri{L}$.

\item 
Vertices with the same left clique are ordered in ascending 
order by their right clique.

\item\label{item:onemoreleft}
For two vertices $v,w\in V$ with $v\succ_\ri{L} w$, we know that $v$ and $w$ are adjacent and either
\begin{enumerate}

\item
$\lc{v}=\lc{w}$ and $w$ reaches at least as far to the right as $v$, or

\item
$\lc{v}\succ_\ri{C}\lc{w}$, in other words, $v$ reaches one clique further to the 
left than $w$.%
\footnote{Because the relation \q{reaching to the right} is not defined, if $v$ and $w$ are counter vertices, we intuitively extend Definition \ref{defi:reaches} to this case by fixing $\mc{x} = \lc{w}$.}

\end{enumerate}
\end{enumerate}
\end{remark}

The next definition is based on Assertion \ref{item:leftconsecutiv} of the preceding 
remark, that is to say, on the fact that, since vertices having the same left clique 
appear consecutively in $\ri{L}$, for a given clique $\mc{c}\in\mc{X}$, two 
vertices $v,w\in \mc{c}$ exist, such that the vertices in  ring-interval $[v,w]_\ri{L}$ are 
exactly those with left clique $\mc{c}$.

\begin{definition}[Head Vertex $\first{\mc{c}}$ and Tail Vertex $\last{\mc{c}}$]
Let $(\mc{X},\ri{C})$ be a clique-cycle for some circular-arc graph $G=(V,E)$, $\ri{L}$ 
given by Listing \ref{fig:animals}, $\mc{c}\in\mc{X}$, and $v,w\in \mc{c}$ the unique 
vertices such that $[v,w]_\ri{L}=\{x\in V~|~\lc{x} = \mc{c}\}$ ($v$ and $w$ are not necessarily distinct).
We call $v$ the \emph{head vertex of $\mc{c}$} and $w$ the \emph{tail vertex of $\mc{c}$} 
and denote them by $\first{\mc{c}}$ and $\last{\mc{c}}$, respectively.
\end{definition}


\begin{corollary}\label{coro:tail vertexheal}
Let $(\mc{X},\ri{C})$ be a clique-cycle for a circular-arc graph $G=(V,E)$, $\ri{L}$ given 
by Listing \ref{pseudoLabeling}, and $\mc{c}\in\mc{X}$.
The successor (in $\ri{L}$) of the tail vertex of $\mc{c}$ is the head vertex of the 
successor (in $\ri{C}$) of $\mc{c}$, i.e.,
$\ri{L}(\last{\mc{c}})=\first{\ri{C}(\mc{c})}$.
\end{corollary}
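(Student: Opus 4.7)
My plan is to trace through the pseudocode in Listing \ref{pseudoLabeling}. First I would locate the iteration of the outer while-loop (Line 6) in which $\mc{n}=\mc{c}$. In this iteration the set $A$ built in Line 7 consists of exactly the vertices with left clique $\mc{c}$, and the inner loop (Lines 8--12) appends them one by one to $\ri{L}$ via the assignments $\ri{L}(p):=a$ and $p:=a$. By the definitions of head and tail, the first such vertex is $\first{\mc{c}}$ and the last is $\last{\mc{c}}$; in particular, upon termination of the inner loop the pointer $p$ holds $\last{\mc{c}}$.

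Line 13 then sets $\mc{n}:=\ri{C}(\mc{c})$, and in the next iteration of the outer loop the very first vertex chosen in Line 9 is, again by the definition of the head vertex, $\first{\ri{C}(\mc{c})}$. Line 10 executes $\ri{L}(p):=\first{\ri{C}(\mc{c})}$ with $p$ still equal to $\last{\mc{c}}$, which gives the desired identity $\ri{L}(\last{\mc{c}})=\first{\ri{C}(\mc{c})}$. The only edge case is the wrap-around $\ri{C}(\mc{c})=\mc{f}$, where the outer loop breaks at Line 14. Here Line 15 sets $\ri{L}(\last{\mc{c}})=\ri{L}(p):=\ri{L}(t)$; but since $t$ was the initial value of $p$ (Line 5), $\ri{L}(t)$ was overwritten during the very first execution of Line 10 to $\first{\mc{f}}=\first{\ri{C}(\mc{c})}$, so the identity holds in this case as well.

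The only real subtlety I expect is the tacit assumption that $\first{\mc{n}}$ and $\last{\mc{n}}$ are defined for every $\mc{n}\in\mc{X}$, i.e., that every clique in the clique-cycle is the left clique of at least one vertex. To justify this I would note that $\mc{n}$ and $\ri{C}^{-1}(\mc{n})$ are distinct inclusion-maximal cliques in the set of cliques corresponding to points of $C$, so there exists some $v\in\mc{n}\setminus\ri{C}^{-1}(\mc{n})$; such a $v$ is necessarily non-dominating (dominating vertices belong to every maximal clique), so the reassignment in Line 1 does not affect its left clique, and by Definition \ref{lcrc} we get $\lc{v}=\mc{n}$. Hence the set $A$ in Line 7 is non-empty whenever $\mc{n}\in\mc{X}$, guaranteeing that both head and tail vertices exist and the trace above goes through.
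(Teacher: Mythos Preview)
Your proof is correct. The paper itself offers no proof of this corollary, merely calling it ``obvious'' from the construction of $\ri{L}$; your trace through Listing~\ref{pseudoLabeling} is exactly the argument the authors left implicit, and your handling of the wrap-around via the dummy vertex $t$ and the non-emptiness of the set $A$ in Line~7 fills in precisely the small gaps one would want made explicit.
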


\subsection{Partition of the Vertex Order}\label{furtherDefs}

The cyclic permutation $\ri{L}$ on $V$ allows to state further definitions.
We now fix $v\in V$ arbitrarily for the rest of the proof and show that $v$ given $\ri{L}$ 
suffices a shortest path $2$-SIRS.
If $v$ is a dominating vertex, we can define $\ema{I}_v\colon(v,w)\mapsto[w,w]_\ri{L}$ 
for every vertex $w\in V-\{v\}$ and have thereby shown that $v$ given $\ri{L}$ suffices a shortest path $1$-SIRS.
Therefore, we now assume that $v$ is not a dominating vertex.
Fig. \ref{fig:tiger3} illustrates the subsequent Definition.

\begin{definition}[Left Vertex $\lvv$]\label{de_l_v}
Let $(\mc{X},\ri{C})$ be a clique-cycle for some circular-arc graph $G=(V,E)$, $\ri{L}$ 
given by Listing \ref{fig:animals}, and $v\in V$ not a dominating vertex.
\begin{inparaenum}[(i)]
Let $L\subset V$ be the union of the following two vertex sets.
\item The set of
vertices that are adjacent to $v$ and reach further to the left than $v$ 
but are neither counter vertices of $v$ nor dominating vertices.
\item
The set of vertices in $[\first{\lc{v}},v[_\ri{L}$.
\footnote{We have $\dom\cap[\first{\lc{v}},v[_\ri{L} = \emptyset$ and $\co{v} \cap [\first{\lc{v}},v[_\ri{L} = \emptyset$.}
\end{inparaenum}
If $L\neq\emptyset$, we define the \emph{left vertex of $v$}, denoted by $\lvv$,  
as the unique vertex in $L$ such that
\begin{equation}\label{eq_l_v}
\forall w \in L\colon [w,v[_\ri{L}\subseteq[\lvv,v[_\ri{L} \text{ (or, equivalently, } L\subseteq[\lvv,v[_\ri{L}\text{)}.
\end{equation}
\end{definition}

\begin{figure}
\centering
\subfloat[The green vertices are in set $L$ of Definition \ref{de_l_v}. 
The left vertex of $v$, which is also in $L$, is colored in light green 
and the middle vertex of $v$ in red.]{\label{fig:tiger3}\includegraphics[trim=0cm 2cm 0cm 0cm, width=0.45\textwidth,
  ]{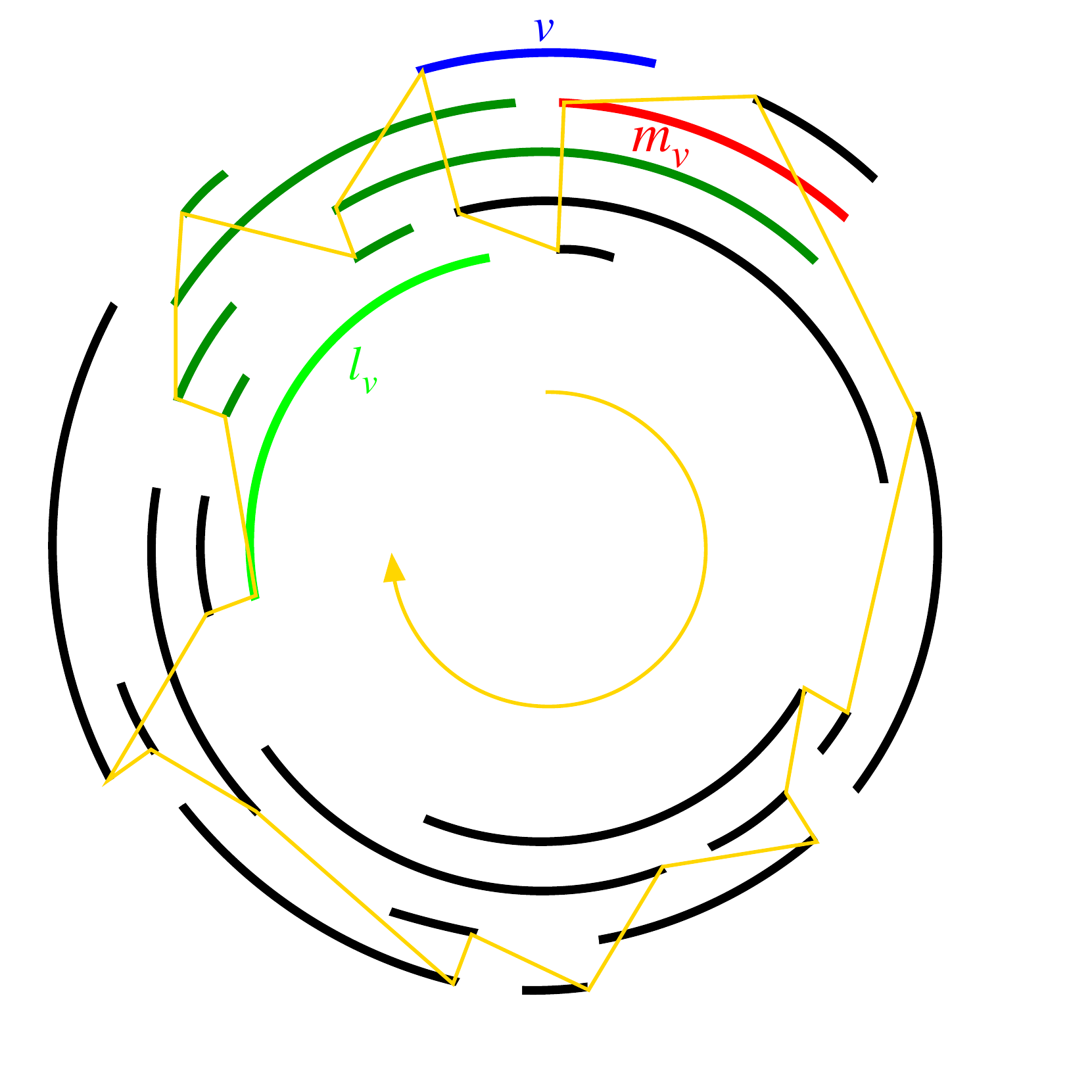}}\quad
  \subfloat[The green vertices are to the left of $v$, the red vertices are to the right of $v$, and the black vertices are face-to-face with $v$. 
]{\label{fig:gull3}\includegraphics[trim=0cm 2cm 0cm 0cm, width=0.45\textwidth,
  ]{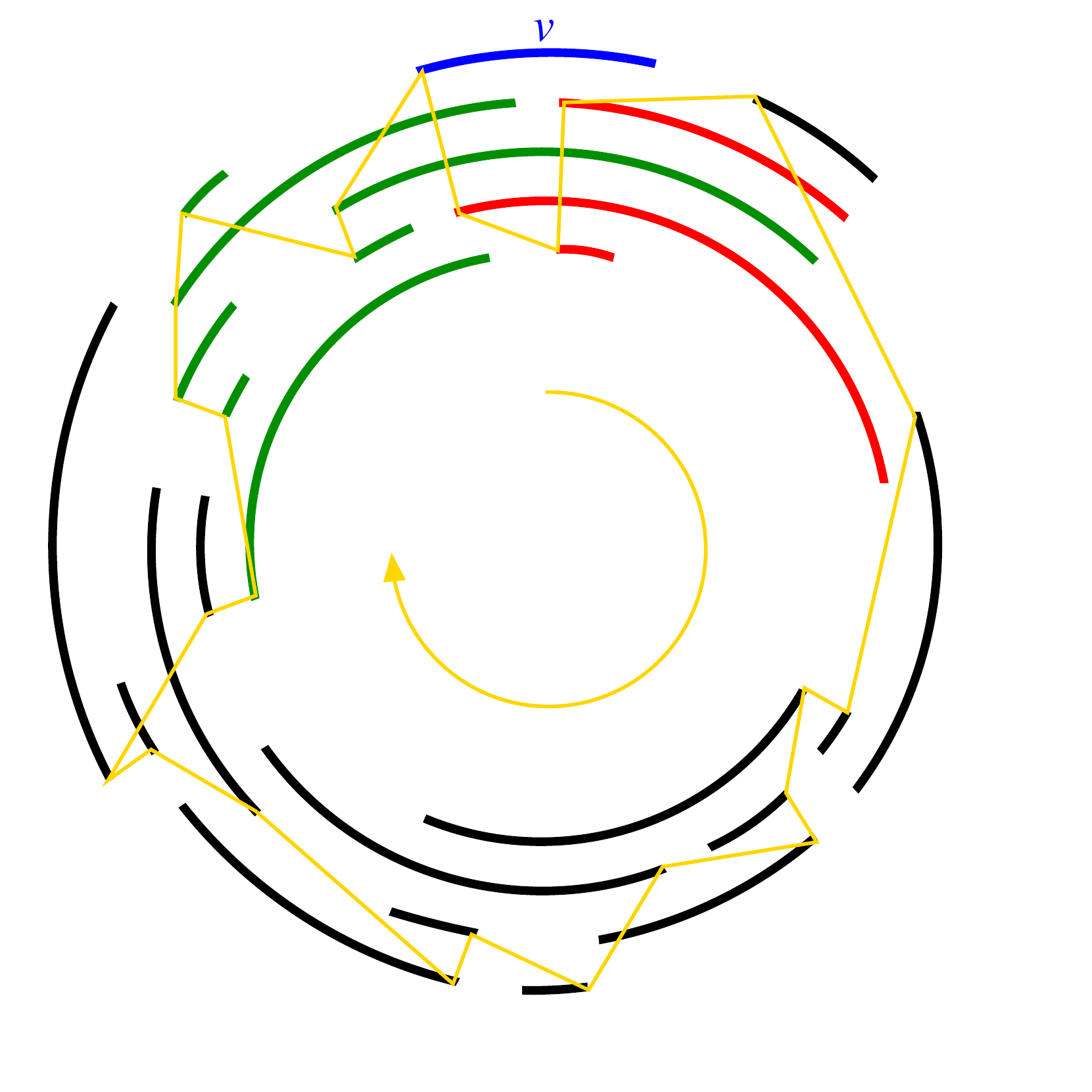}}                
  \vspace{-5pt}\caption[An example for the vertex partition defined in Section \ref{furtherDefs}]{Two examples for the Definitions made in this section
The fixed vertex $v$ is colored blue. 
The yellow line connects the vertices as they appear in vertex order $\ri{L}$ as generated by Listing \ref{fig:animals}.
}
\label{fig:anotherExamplefurtherDefs}
\end{figure}

\begin{theorem}\label{no_vertex_further_than_lv}
Let $(\mc{X},\ri{C})$ be a clique-cycle for a circular-arc graph $G=(V,E)$, $\ri{L}$ 
given by Listing \ref{fig:animals}, $v\in V$ not a dominating vertex, $\lvv$ the left 
vertex of $v$, and $u\neq\lvv$ be a vertex that is adjacent to $v$ and reaches further to the left than $v$ but is neither 
a dominating vertex nor a counter vertex of $v$. 
Then $\lvv$ reaches at least as far to the left as $u$.
\end{theorem}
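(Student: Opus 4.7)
The plan is a short chain of containments: $u \in L$ (the set in Definition~\ref{de_l_v}), so $u \in [\lvv, v]_\ri{L}$; the construction of $\ri{L}$ then forces $\lc{u} \in [\lc{\lvv}, \lc{v}]_\ri{C}$; and this inclusion, together with $\lc{v}$ being a common clique of $\lvv, u, v$, is exactly what Definition~\ref{defi:reaches} requires (with $\mc{x} := \lc{v}$) for $\lvv$ to reach at least as far to the left as $u$. To begin, I verify $u \in L$: the hypothesis that $u$ reaches further to the left than $v$ already forces $u, v$ to be adjacent with $u \notin \co{v}$ by Definition~\ref{defi:reaches}, and combined with the remaining hypotheses ($u$ not dominating, not a counter vertex of $v$) this places $u$ under clause~(i) of Definition~\ref{de_l_v}. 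Thus $L \neq \emptyset$, the left vertex $\lvv$ is defined, and the characterizing inclusion $L \subseteq [\lvv, v]_\ri{L}$ in equation~\eqref{eq_l_v} yields $u \in [\lvv, v]_\ri{L}$.

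Next I pass from the vertex order $\ri{L}$ to the clique order $\ri{C}$. Listing~\ref{pseudoLabeling} visits the cliques of $\mc{X}$ in $\ri{C}$-order and empties all vertices with the current left clique before advancing; by Remark item~\ref{item:onemoreleft}, any $\ri{L}$-consecutive pair $w \succ_\ri{L} w'$ satisfies either $\lc{w} = \lc{w'}$ or $\lc{w}$ strictly precedes $\lc{w'}$ in $\ri{C}$. Iterating this along the $\ri{L}$-successor chain from $\lvv$ to $v$, the associated sequence of left cliques forms a weakly forward walk in $\ri{C}$ from $\lc{\lvv}$ to $\lc{v}$, never overshooting $\lc{v}$ because the ring-interval terminates at $v$. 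Applying this to $u$ yields $\lc{u} \in [\lc{\lvv}, \lc{v}]_\ri{C}$.

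To conclude, note that $\lvv$ itself must come from clause~(i) of Definition~\ref{de_l_v}: the presence of $u$ in $L$ with $\lc{u}$ strictly before $\lc{v}$ pushes $\lvv$ at least as far back in $\ri{L}$ as $u$, hence past every clause-(ii) candidate (whose left clique equals $\lc{v}$). Therefore $\lvv$ also reaches further to the left than $v$, both $[\lc{\lvv}, \rc{\lvv}]_\ri{C}$ and $[\lc{u}, \rc{u}]_\ri{C}$ contain $\lc{v}$, and $\lc{v}$ is a common clique of $\lvv, u, v$. With $\mc{x} := \lc{v}$, the recursive definition of ring-intervals (or Corollary~\ref{ringkombi}) promotes $\lc{u} \in [\lc{\lvv}, \lc{v}]_\ri{C}$ into $[\lc{u}, \lc{v}]_\ri{C} \subseteq [\lc{\lvv}, \lc{v}]_\ri{C}$, which is precisely the defining condition of Definition~\ref{defi:reaches}. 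The main obstacle is the iteration step in paragraph two: the Remark is only a one-step claim about consecutive pairs, so lifting it to an arbitrary pair in the ring-interval $[\lvv, v]_\ri{L}$ requires careful verification that traversing this ring-interval induces exactly one forward pass through $\ri{C}$ terminating at $\lc{v}$, rather than wrapping past it.
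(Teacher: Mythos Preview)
Your proof is correct and uses essentially the same idea as the paper, just in the direct direction rather than by contradiction. The paper assumes $u$ reaches further to the left than $\lvv$, uses Definition~\ref{defi:reaches} to get $[\lc{\lvv},\lc{v}]_\ri{C}\subset[\lc{u},\lc{v}]_\ri{C}$, then invokes the primary ordering of $\ri{L}$ by left cliques to obtain $[\lvv,v]_\ri{L}\subset[u,v]_\ri{L}$, contradicting \eqref{eq_l_v}; you run the same chain of implications backwards, starting from $u\in L\subseteq[\lvv,v]_\ri{L}$ and ending at $[\lc{u},\lc{v}]_\ri{C}\subseteq[\lc{\lvv},\lc{v}]_\ri{C}$. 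Your additional care in verifying that $\lvv$ falls under clause~(i) and that $\lc{v}$ is a common clique is not spelled out in the paper's version, and the ``wrapping'' concern you flag in your last sentence applies equally to the paper's one-line appeal to the primary ordering---neither proof makes this step fully formal, but it is justified by the fact that one full pass through $\ri{L}$ corresponds to exactly one full pass through $\ri{C}$.
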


\begin{proof}
Assume $u$ reaches further to the left than $\lvv$.
Then by Definition \ref{defi:reaches} we know that 
$[\lc{\lvv},\lc{v}]_\ri{C}\subset[\lc{u},\lc{v}]_\ri{C}$, which implies 
that $$\cordered[\lc{u},\lc{v}]_\ri{C}=(\lc{u},\ldots,\lc{\lvv},\ldots,\lc{v}).$$
Since $\ri{L}$ is primary ordered by the left cliques of the vertices, it follows 
$$u\succ_\ri{L}\ldots\succ_\ri{L} \lvv\succ_\ri{L}\ldots\succ_\ri{L}v$$ and therefore 
$[\lvv,v]_\ri{L}\subset[u,v]_\ri{L}$.
This contradicts (\ref{eq_l_v}) in Definition \ref{de_l_v}, since $u\in L$ reaches 
further to the left than $v$ and is neither a 
dominating vertex nor a counter vertex of $v$.
\end{proof}

In Definition \ref{leftrightface} we partition the vertices in $V-\{v\}$ into three 
ring-intervals in $\ri{L}$.
The following argumentation gives the idea behind this partitioning and also proves 
Theorem \ref{Lvrl_v}. Consider we go through $\ri{L}$ beginning at $v$.
Since the vertices in $\ri{L}$ are primary ordered by their left cliques, the left 
clique of the vertices first considered is $\lc{v}$ (or $\ri{C}(\lc{v})$, if $v$ is the 
tail vertex of clique $\lc{v}$).
The next vertices traversed have left clique $\ri{C}(\lc{v})$, followed by vertices with 
left clique $\ri{C}^2(\lc{v})$, and so on.
We eventually reach the set of vertices with left clique $\rc{v}$.
The last vertex we come across in this set is $\last{\rc{v}}$ and fixed in Definition \ref{de_m_v}.
The vertices we came across so far (excluding $v$, including $\last{\rc{v}}$) will be 
defined as the \q{vertices to the right of $v$} in Definition \ref{leftrightface} and are 
adjacent to $v$, as $v$ was contained in their left clique.

\begin{definition}[Middle Vertex $\mvv$]\label{de_m_v}
Let $(\mc{X},\ri{C})$ be a clique-cycle for a cir\-cular-arc graph $G=(V,E)$, $\ri{L}$ 
given by Listing \ref{fig:animals}, and $v\in V$ not a dominating vertex.
We define the \emph{middle vertex of $v$}, denoted by $\mvv=\last{\rc{v}}$,  
as the tail vertex of the right clique of $v$.
\end{definition}

When we continue visiting the vertices after $\mvv$ in the same manner, the next vertex $u$ we come across that is not a dominating vertex but adjacent to $v$ has to be $\lvv$, as we could find a contradiction in the same manner as 
in the proof of Theorem \ref{no_vertex_further_than_lv}, if we had $u\neq\lvv$.
These vertices after $\mvv$ and before $\lvv$ are defined as the \q{vertices face-to-face with $v$} 
in Definition \ref{leftrightface}. When we continue to go through $\ri{L}$, we eventually 
reach $v$ again. These vertices from $\lvv$ to $v$ (excluding $v$) are defined 
as the \q{vertices to the left of $v$} in Definition \ref{leftrightface}, which is illustrated in Fig. \ref{fig:gull3}.

\begin{definition}[$A_v$, $B_v$, $C_v$]\label{leftrightface}
Let $(\mc{X},\ri{C})$ be a clique-cycle for a circular-arc graph $G=(V,E)$, $\ri{L}$ 
given by Listing \ref{fig:animals}, $v\in V$ not a dominating vertex, $\lvv$ the left 
vertex of $v$, if $l_v$ exists, and $\mvv$ the middle vertex of $v$. We define
\begin{enumerate}
\item
the \emph{vertices to the right of $v$} 
as $A_v=]v,\mvv]_\ri{L}$, and 


\item
the \emph{vertices face-to-face with $v$}
\begin{enumerate}


\item
as $B_v=]\mvv,v[_\ri{L}$, if $\lvv$ does not exist, or else

\item
as $B_v=]\mvv,\lvv[_\ri{L}$, and
\end{enumerate}
\item 
the \emph{vertices to the left of $v$}%
\begin{enumerate}

\item 
as $C_v=[\lvv,v[_\ri{L}$, if $\lvv$ exists, or else

\item
as $C_v=\emptyset$.\footnote{
We have $C_v=L$ from Definition \ref{de_l_v}.
}
\end{enumerate}
\end{enumerate}
\end{definition}

The next theorem follows from the argumentation between 
Theorem \ref{no_vertex_further_than_lv} and Definition \ref{leftrightface}.

\begin{theorem}\label{Lvrl_v}
Let $(\mc{X},\ri{C})$ be a clique-cycle for a circular-arc graph $G=(V,E)$, $\ri{L}$ 
given by Listing \ref{fig:animals}, and $v\in V$ but not a dominating vertex.
\begin{enumerate}

\item \label{Lvrl_v1}
Every vertex to the right of $v$ is adjacent to $v$.

\item
Every vertex face-to-face with $v$ that is not a dominating vertex is not 
adjacent to $v$.
\end{enumerate}
\end{theorem}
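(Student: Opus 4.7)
The plan is to prove both parts by unwinding the definitions of $A_v$ and $B_v$ and exploiting the primary-left-clique ordering of $\ri{L}$ produced by Listing~\ref{pseudoLabeling}, together with Corollary~\ref{coro:tail vertexheal}.

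For part~(\ref{Lvrl_v1}) I would first dispose of the trivial case $v=\mvv$, where $A_v=\emptyset$. Otherwise, fix any $w\in A_v=[\ri{L}(v),\mvv]_\ri{L}$. Since $v$ has left clique $\lc{v}$, $\mvv=\last{\rc{v}}$ has left clique $\rc{v}$, and left cliques advance monotonically in $\ri{C}$ as we walk through $\ri{L}$, we must have $\lc{w}\in[\lc{v},\rc{v}]_\ri{C}$. But $[\lc{v},\rc{v}]_\ri{C}$ is exactly the set of cliques containing $v$, so $v\in\lc{w}$; hence $v$ and $w$ jointly lie in clique $\lc{w}$ and are adjacent.

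For part~(2), suppose toward a contradiction that there is a non-dominating $w\in B_v$ that is adjacent to $v$. By Corollary~\ref{coro:tail vertexheal} the first vertex of $B_v$ is $\first{\ri{C}(\rc{v})}$, so every vertex of $B_v$ has left clique strictly past $\rc{v}$ in $\ri{C}$; in particular $\lc{w}\notin[\lc{v},\rc{v}]_\ri{C}$. I then rule out $w\in\co{v}$: the arc-model picture of a counter-vertex pair requires each of the four clique endpoints to sit strictly inside the other vertex's ring-interval, which in particular forces $\lc{w}\in[\lc{v},\rc{v}]_\ri{C}$. Since $v,w$ are adjacent but not counter vertices, their shared cliques form a single nonempty ring-interval inside $[\lc{v},\rc{v}]_\ri{C}\cap[\lc{w},\rc{w}]_\ri{C}$, and because $\lc{w}$ sits past $\rc{v}$ this intersection can only be entered by $[\lc{w},\rc{w}]_\ri{C}$ wrapping around past $\lc{v}$; by Definition~\ref{defi:reaches} this means $w$ reaches further to the left than $v$.

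Putting the pieces together, $w$ is adjacent to $v$, reaches further to the left than $v$, is not a counter vertex of $v$, and is not dominating; by Definition~\ref{de_l_v} this places $w$ in the set $L$, and condition~(\ref{eq_l_v}) then forces $w\in[\lvv,v]_\ri{L}$, contradicting $w\in B_v$ (and if $\lvv$ does not exist we already have $L=\emptyset$, still contradicting $w\in L$). The main obstacle will be the counter-vertex step: one has to cleanly translate the arc-model picture of a counter-vertex pair into the clique-cycle statement $\lc{w}\in[\lc{v},\rc{v}]_\ri{C}$, so that $\lc{w}\notin[\lc{v},\rc{v}]_\ri{C}$ actually excludes $w\in\co{v}$. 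Once that step is in hand, the rest of the argument is routine definition-chasing.
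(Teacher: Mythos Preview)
Your proposal is correct and follows the same route as the paper: the paper does not give a separate formal proof but declares that the theorem ``follows by the argumentation between Theorem~\ref{no_vertex_further_than_lv} and Definition~\ref{leftrightface}'', which is exactly the walk-through-$\ri{L}$ argument you reproduce (left cliques of $A_v$ lie in $[\lc{v},\rc{v}]_\ri{C}$ for part~(\ref{Lvrl_v1}); any non-dominating adjacent vertex past $\mvv$ would land in the set $L$ of Definition~\ref{de_l_v} and hence in $[\lvv,v]_\ri{L}$ for part~(2)). Your write-up is in fact more careful than the paper's, since you explicitly rule out $w\in\co{v}$ via the endpoint configuration of counter-vertex pairs, a step the paper's informal argument passes over in silence; your instinct that this is the one place requiring genuine work is right, and the justification you sketch (counter vertices force $\lc{w}\in[\lc{v},\rc{v}]_\ri{C}$) is the correct one.
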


\subsection{Definition of the Directed Edge-labeling}\label{2-app}

Next we show that $v$ given $\ri{L}$ suffices a shortest 
path $2$-SIRS by constructing a mapping $\ema{I}_v$ from the set of outgoing directed edges of $v$ 
to at most two ring-intervals in $\ri{L}$, according to Definition \ref{defiIRS}.
We investigate the vertices to the left of $v$ and the vertices 
to the right of $v$ in Section \ref{lvbisv} and the vertices face-to-face with $v$ 
in Section \ref{mvbislv}.
For every vertex $u$ in the respective interval, we determine a first vertex $w$ from 
$v$ to $u$ such that $u$ and every vertex that is hitherto assigned to directed edge $(v,w)$ can 
be embraced by at most two ring-intervals in $\ri{L}$.
This is a simple task for the vertices in $A_v$ and $C_v$ and 
even for the vertices in $B_v$ the approach is straight forward, if there is a 
dominating vertex or a counter vertex of $v$.
In fact, it only gets tricky, if there are no dominating vertices and no 
counter vertices.

\subsubsection{Vertices to the Left and Right}\label{lvbisv}

By the first assertion of Theorem \ref{Lvrl_v} every vertex to the right of $v$ is 
adjacent to $v$, wherefore these vertices can be distributed by assigning $[w,w]_\ri{L}$ 
to directed edge $(v,w)$, for every vertex $w\in A_v$. 
Assume that we have $C_v\neq\emptyset$.
In general, not every vertex in $C_v$ is adjacent to $v$. 
Although $\lvv$ is adjacent to all vertices in $C_v$, it is generally not possible to assign 
$C_v$ to directed edge $(v,\lvv)$, since $C_v$ may contain vertices that are adjacent to $v$ (other than $\lvv$), and therefore must be assigned 
to \q{their own directed edge}.
Let $v_i\in C_v$ be a vertex that is adjacent to $v$ and assume we start at $v_i$ to go through 
$\ri{L}$ until we come across the next vertex that is adjacent to $v$ or $v$ itself.
Denote this vertex $v_j$.
We can show that $v_i$ is adjacent to every in $]v_i,v_j]_\ri{L}$, wherefore we 
can assign $[v_i,v_j[_\ri{L}$ to directed edge $(v,v_i)$.
Therefore the vertices to the left of $v$ can be distributed by assigning every vertex 
$w\in C_v$ to directed edge $(v,w)$, if $w$ is adjacent to $v$, and else to directed edge $(v,w')$, where $w'$ 
is the first vertex that precedes $w$ in $\ri{L}$ and is adjacent to  $v$.
Theorem \ref{theototheleft} proves the outlined idea formally.
When distributing the vertices in sets $A_v$ and $C_v$ as just outlined, every 
outgoing directed edge of $v$ (except for edges incident to dominating vertices face-to-face with $v$, which are not yet labeled) gets one ring-interval assigned.

\begin{theorem}\label{theototheleft}
Let $(\mc{X},\ri{C})$ be a clique-cycle for a circular-arc graph $G=(V,E)$, $\ri{L}$ 
given by Listing \ref{fig:animals}, $v\in V$ but not a dominating vertex, and $\lvv$ 
the left vertex of $v$ (then the vertices in $[\lvv,v[_\ri{L}$ are the vertices to 
the left of $v$).
Let $\cordered[\lvv,v]_\ri{L}=(\lvv=v_1,v_2,\ldots,v_n=v)$ be the sequence of vertices that we obtain, when we order the vertices in $[\lvv,v[_{\ri{L}}$ that are adjacent to $v$ as they appear in ring-sequence $\cordered[\lvv,v[_\ri{L}$ and append $v$.
For $1\leq i<n$ and $x\in[v_i,v_{i+1}[_\ri{L}$, vertex $v_i$ is a first vertex 
from $v$ to $x$.
\end{theorem}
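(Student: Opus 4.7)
The plan is to show that for every $x \in [v_i,\ri{L}^{-1}(v_{i+1})]_\ri{L}$ a shortest path from $v$ to $x$ can be made to start with the edge $(v,v_i)$. I would first split on whether $x = v_i$. If $x = v_i$, then $v_i$ is adjacent to $v$ by the construction of the sequence, and the single edge $(v,v_i)$ is already a shortest path of length~$1$ with first vertex $v_i$. If $x \neq v_i$, then $x$ lies strictly between $v_i$ and $v_{i+1}$ in $\ri{L}$; since the subsequence $(v_1,\ldots,v_{n-1})$ by definition exhausts every $v$-neighbor in $[\lvv,\ri{L}^{-1}(v)]_\ri{L}$, no such $x$ can be adjacent to $v$, hence $\dist{v}{x}\geq 2$. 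Thus the theorem reduces to proving $v_i \sim x$: the path $(v,v_i,x)$ would then be of length~$2$, match the distance, and have $v_i$ as its first vertex.

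To establish $v_i \sim x$ I would distinguish whether $v_i \in \co{v}$. If $v_i$ is a counter vertex of $v$, then Corollary~\ref{adjacentToVorCoV} says $\{v,v_i\}$ is a dominating set, so $x \not\sim v$ forces $x \sim v_i$. If $v_i \notin \co{v}$, I work in the clique-cycle to show $\lc{x} \in [\lc{v_i},\rc{v_i}]_\ri{C}$; then $v_i$ belongs to the clique $\lc{x}$ and, since $x \in \lc{x}$ by definition, the two vertices share $\lc{x}$ and are adjacent. The membership is obtained by chaining three ring-interval inclusions in $\ri{C}$: by Assertion~\ref{item:leftconsecutiv} of the Remark after Listing~\ref{pseudoLabeling}, the primary ordering of $\ri{L}$ by left clique gives $\lc{x} \in [\lc{v_i},\lc{v_{i+1}}]_\ri{C}$; the same property applied to the positions of $v_i,v_{i+1},\ldots,v_n = v$ in $\ri{L}$ gives $\lc{v_{i+1}} \in [\lc{v_i},\lc{v}]_\ri{C}$; and since $v_i \sim v$ without being a counter vertex, the nonempty intersection of the clique-intervals of $v_i$ and $v$ is a single ring-interval which, together with $\lc{v_i}$ preceding $\lc{v}$ in the forward $\ri{C}$-direction, forces $\lc{v} \in [\lc{v_i},\rc{v_i}]_\ri{C}$. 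Nesting the three gives the desired membership.

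The main obstacle is this third step. In a cyclic order one has to argue explicitly that $\lc{v_i}$ lies ``before'' $\lc{v}$ so that the chain of ring-interval inclusions does not wrap around, and that the non-counter-vertex hypothesis really rules out the alternative two-component intersection pattern in which $\lc{v}$ could lie outside $[\lc{v_i},\rc{v_i}]_\ri{C}$. Boundary cases---such as $\lc{v_i} = \lc{v}$, $v_i$ being a dominating vertex whose left and right cliques are fixed by Line~1 of Listing~\ref{pseudoLabeling}, or $v_{i+1} = v$---also have to be checked, but they do not alter the conclusion.
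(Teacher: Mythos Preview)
Your proposal is correct and follows essentially the same line as the paper's proof: handle $x=v_i$ trivially, observe that any other $x$ fails to be adjacent to $v$, use the primary ordering of $\ri{L}$ by left clique to locate $\lc{x}$ in $[\lc{v_i},\lc{v}]_\ri{C}$, and then use the adjacency of $v_i$ to $v$ to conclude that $v_i$ lies in every clique of that interval, hence in $\lc{x}$. The paper compresses your three-step chain into the single sentence ``since $v_i$ is adjacent to $v$, $v_i$ is contained in every clique in $[\lc{v_i},\lc{v}]_\ri{C}$'' and does not separate out the case $v_i\in\co{v}$; your added care there and with the boundary cases is harmless (and arguably more rigorous), but does not change the approach.
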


\begin{proof}
For some $i$, $1\leq i<n$, let $x\in[v_i,v_{i+1}[_\ri{L}$.
Clearly the theorem holds if $x=v_i$.
If $x\neq v_i$, $x$ does not appear in $(v_1,v_2,\ldots,v_n)$ and therefore is not 
adjacent to $v$. Since the vertices in $\ri{L}$ are primary ordered by their left clique 
and $v_i$ appears before $x$ in $\cordered[\lvv,v]_\ri{L}$, it follows that the left 
clique of $x$ is in $[\lc{v_i},\lc{v}]_\ri{C}$. Since $v_i$ is adjacent to $v$, $v_i$ 
is contained in every clique in $[\lc{v_i},\lc{v}]_\ri{C}$.
This implies that $v_i$ is contained in the left clique of $x$ and thus $v_i$ and $x$ 
are adjacent. Therefore $(v,v_i,x)$ is a shortest path between $v$ and $x$ and $v_i$ 
is a first vertex from $v$ to $x$.
\end{proof}

\subsubsection{Vertices Face-to-Face}\label{mvbislv}

This section discusses the distribution of vertices face-to-face with $v$.
It is probably the common case, that neither counter nor dominating vertices exist,
since only then path lengths are unbounded.
Thus, this case is discussed in the next paragraph and the 
subsequent paragraph discusses the remaining cases.

\paragraph{Neither Dominating nor Counter Vertices exist}\label{beideLeer}
In this case, $\lvv$ always reaches further to the left than $v$ and there is at least 
one vertex adjacent to $v$ that reaches further to the right.
Let $w\in V$ be a vertex that is not adjacent to $v$, $L$ be the set of vertices adjacent to $v$ 
that reach farthest to the left and $R$ be the set of vertices adjacent to $v$ that reach 
farthest to the right ($L$ and $R$ might intersect).
As evident from the arc model of $G$, every vertex in $L$ or every vertex in $R$ 
is a first vertex from $v$ to $w$ (cf. Corollary \ref{b}).
Since we have $\lvv\in L$, we now fix a vertex adjacent to $v$ that reaches farthest to the right.

\begin{definition}[Right Vertex $\rvv$]\label{de_r_v}
Let $(\mc{X},\ri{C})$ be a clique-cycle for a strict cir\-cu\-lar-arc graph $G=(V,E)$ without 
dominating vertices and without counter vertices, $\ri{L}$ given by Listing \ref{fig:animals}, 
$v\in V$, $\lvv$ the left vertex of $v$, $\mvv$ the middle vertex of $v$, and $R\subset V$ be the 
set of vertices that are adjacent to $v$ and reach farthest to the right.
We define the \emph{right vertex of $v$}, denoted by $\rvv$,  as $\lvv$, if $\lvv\in R$, or 
as $\mvv$, if $\mvv\in R$, or else as an arbitrary vertex in $R$.
\end{definition}
In order to show that $v$ suffices a shortest path 2-SIRS given $\ri{L}$, we could choose $\rvv$ arbitrarily in $R$ even if $\lvv\in R$ or $\mvv\in R$.
However, we will show that $v$ given $\ri{L}$ suffices a 
shortest path $1$-SIRS, if $\lvv\in R$ or $\mvv\in R$.
The next definition redefines the notation of the left and right vertex as a 
function, in order to allow 
recursive usage to easily determine a first vertex from $v$ to every vertex.
The subsequent corollary is clear when illustrated.

\begin{definition}[$\lvi{v}{}$,  $\rvi{v}{}$ and $\lvd{u}$, $\rvd{u}$]\label{lvi}
Let $(\mc{X},\ri{C})$ be a clique-cycle for a strict circular-arc graph $G=(V,E)$ without 
dominating vertices and without counter vertices, cyclic permutation 
$\ri{L}$ given by Listing \ref{fig:animals}, 
and $v\in V$. We define $\lvi{v}{}$ as the left vertex of $v$ and $\rvi{v}{}$ as the right vertex 
of $v$.\footnote{It follows that $\lvi{v}{2}$ is the left vertex of the left vertex of $v$ and $\rvi{v}{2}$ 
is the right vertex of the right vertex of $v$, etc.}
For $u\in V$, we define $\lvd{u}$ as the smallest $i$ such that $u$ is adjacent to $\lvi{v}{i}$ and  
$\rvd{u}$ as the smallest $i$ such that $u$ is adjacent to $\rvi{v}{i}$.
\end{definition}

\begin{corollary}\label{b}
Let $(\mc{X},\ri{C})$ be a clique-cycle for a strict circular-arc 
graph $G=(V,E)$ without dominating vertices and without counter 
vertices, cyclic permutation 
$\ri{L}$ given by Listing \ref{fig:animals}, 
and $v\in V$. The following three assertions hold for every 
vertex $u\in V$ that is not adjacent to $v$.
\begin{enumerate}
\item\label{jans}
If
$\lvd{u}<\rvd{u}$, then $\lvv$ is a first vertex from $v$ to $u$.
\item\label{jans2}
If
$\lvd{u}>\rvd{u}$, then $\rvv$ is a first vertex from $v$ to $u$.
\item\label{jans3}
If
$\lvd{u}=\rvd{u}$, then $\lvv$ and $\rvv$ are first vertices from $v$ to $u$.
\end{enumerate}
\end{corollary}

Since no vertex in $B_v$ is adjacent to $v$, the corollary implies that $\lvv$ or $\rvv$ is a 
first vertex from $v$ to every vertex in $B_v$.
Since sequence 
$\rvi{v}{1},\rvi{v}{2},\ldots$ \q{runs} through the clique-cycle as 
implied by $\ri{C}$ and sequence 
$\lvi{v}{1},\lvi{v}{2},\ldots$ \q{runs} through the clique-cycle as 
implied by $\ri{C}^{-1}$, that is to say, in the other direction, 
there exists an $i\in\mathbb{N}$ such that $\lvi{v}{i}$ and $\rvi{v}{i}$ \q{meet}, 
or more formally, are adjacent or equal. 
This number is fixed in the following definition.

\begin{definition}[Apex Number]\label{apex}
Let $(\mc{X},\ri{C})$ be a clique-cycle for a strict cir\-cular-arc graph $G=(V,E)$ 
without dominating vertices and without counter vertices, 
$\ri{L}$ given by Listing \ref{fig:animals}, and $v\in V$.
If we have $\lc{\lvi{v}{1}}=\rc{\rvi{v}{1}}$ or 
$[\lc{v},\rc{v}]_\ri{C}\subset[\rc{\rvi{v}{1}},\lc{\lvi{v}{1}}]_\ri{C}$ 
we set $i=1$, 
otherwise we define $i$ as the smallest number greater than 1 
such that $\lvi{v}{i}$ and $\rvi{v}{i}$ are adjacent or equal.
We call integer $i$ the \emph{apex number of $v$}.
\end{definition}

The separate treatment for apex number $i=1$ in Definition \ref{apex} ensures that the 
arcs of $\lvi{v}{i}$ and $\rvi{v}{i}$ intersect/meet adverse to $v$ on the 
circle and therefore the arcs of $\lvi{v}{1},\ldots,\lvi{v}{i}$, $\rvi{v}{1},\ldots,\rvi{v}{i}$,
and $v$ to cover the entire circle.
In particular, if $\lvv$ and $\rvv$ intersect in $v$'s arc they cannot 
cover the entire cycle, since they are not counter vertices, but the apex 
number would be defined as 1, if only the second part of the case distinction were in place.

By the definition, for apex number $i>2$ vertices $\rvi{v}{i}$ and $\lvi{v}{i-1}$ are 
adjacent if and only if $\lvi{v}{i}$ and $\rvi{v}{i-1}$ are adjacent.
Furthermore, for $2\leq j,k < i$ vertices $\rvi{v}{j}$ and $\lvi{v}{k}$ are not adjacent.

The following Theorem \ref{separator} shows that there 
exists a vertex $\svv\in B_v$ with $\lvi{v}{i},\rvi{v}{i}\in\lc{\svv}$, where $i$ is 
the appex number of $v$, such that $\rvv$ is a 
first vertex from $v$ to every vertex in $B_r=]\mvv,\svv]_\ri{L}$ and vertex $\lvv$ is a 
first vertex from $v$ to every vertex in $B_l=]\svv,\lvv]_\ri{L}$. 
Since Theorem \ref{separator} implies that the vertices in $B_v$ can be distributed 
by assigning $B_l$ to directed edge $(v,\lvv)$ and $B_r$ to directed edge $(v,\rvv)$, the main theorem 
(Theorem \ref{mainTheo}) 
follows for the case in which neither dominating vertices nor counter vertices exist.




\begin{theorem}\label{separator}
Let $(\mc{X},\ri{C})$ be a clique-cycle for a strict circular-arc graph $G=(V,E)$ without 
dominating vertices and without counter vertices, $\ri{L}$ given by 
Listing \ref{fig:animals}, $v\in V$,  $\lvv$ the left 
vertex of $v$, $\mvv$ the middle vertex of $v$, and $\rvv$ the right vertex of $v$.
There exists a vertex $\svv\in~]\mvv,\lvv[_\ri{L}$ such that, for 
every vertex $w\in~]\mvv,\svv]_\ri{L}$, we have $\rvv\in\fe{v}{w}$, and 
for every vertex $w'\in~]\svv,\lvv]_\ri{L}$, 
we have $\lvv\in\fe{v}{w'}$.
\end{theorem}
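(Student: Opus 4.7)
The plan is to define $\svv$ via a monotonicity argument on the two functions $\rvd{\cdot}$ and $\lvd{\cdot}$ along the ring-sequence $\cordered[\ri{L}(\mvv),\ri{L}^{-1}(\lvv)]_\ri{L}$, and then to invoke Theorem \ref{b} to convert the comparison of $\rvd{w}$ and $\lvd{w}$ into the desired first-vertex assertion.

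First I would establish the key monotonicity lemma: if $w_1,w_2\in B_v$ and $w_1$ appears before $w_2$ in $\cordered[\ri{L}(\mvv),\ri{L}^{-1}(\lvv)]_\ri{L}$, then $\rvd{w_1}\le\rvd{w_2}$ and $\lvd{w_1}\ge\lvd{w_2}$. The argument rests on three ingredients. First, by the remark following Listing \ref{pseudoLabeling}, $\ri{L}$ is ordered primarily by left clique, so $\lc{w_1}$ precedes $\lc{w_2}$ in $\cordered[\ri{C}(\rc{v}),\lc{\lvv}]_\ri{C}$. Second, since $G$ contains neither dominating nor counter vertices, each range $[\lc{\rvi{v}{k}},\rc{\rvi{v}{k}}]_\ri{C}$ and each range $[\lc{w},\rc{w}]_\ri{C}$ for $w\in B_v$ is a single ring-interval, with the latter disjoint from $[\lc{v},\rc{v}]_\ri{C}$. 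Third, the right cliques $\rc{\rvi{v}{k}}$ advance weakly clockwise in $\ri{C}$ as $k$ grows, up to the apex number of $v$. Together these imply that a vertex $w\in B_v$ is adjacent to $\rvi{v}{k}$ if and only if $\lc{w}\in[\ri{C}(\rc{v}),\rc{\rvi{v}{k}}]_\ri{C}$, from which the non-decrease of $\rvd$ follows immediately. The corresponding statement for $\lvd$ is symmetric and relies on Theorem \ref{a} to supply the analogous reach property for the left-iterates $\lvi{v}{k}$.

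Once monotonicity is in hand, the quantity $\rvd{w}-\lvd{w}$ is non-decreasing along $\cordered[\ri{L}(\mvv),\ri{L}^{-1}(\lvv)]_\ri{L}$, so one can define $\svv$ as the last vertex in this ring-sequence at which $\rvd{\svv}\le\lvd{\svv}$ holds. For every $w\in[\ri{L}(\mvv),\svv]_\ri{L}$ monotonicity yields $\rvd{w}\le\lvd{w}$, and parts \ref{jans2} and \ref{jans3} of Theorem \ref{b} deliver $\rvv\in\fe{v}{w}$; symmetrically, every $w'\in[\ri{L}(\svv),\ri{L}^{-1}(\lvv)]_\ri{L}$ (when this interval is nonempty) satisfies $\rvd{w'}>\lvd{w'}$, and part \ref{jans} of Theorem \ref{b} then gives $\lvv\in\fe{v}{w'}$.

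The main obstacle is the degenerate case in which no vertex of $B_v$ satisfies $\rvd{w}\le\lvd{w}$, so that the prescription above would place $\svv$ outside $B_v$. In that situation I would set $\svv=\ri{L}(\mvv)$ and verify the single nontrivial requirement $\rvv\in\fe{v}{\ri{L}(\mvv)}$ directly: either $\rvv$ extends past $\rc{v}$, in which case $\rvd{\ri{L}(\mvv)}=1$ and the claim is automatic via Theorem \ref{b}, or else $\rc{\rvv}=\rc{v}$ and every shortest path from $v$ into $B_v$ must depart $v$ through a neighbour reaching farthest to the right, namely $\rvv$. The symmetric boundary case $\svv=\ri{L}^{-1}(\lvv)$ is handled in the same way. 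A recurrent subtlety is that to guarantee monotonicity of $\lvd$ one must exclude any early iterate $\lvi{v}{j}$ with $j$ strictly below the apex number from being adjacent to vertices of $B_v$ whose left cliques precede $\lc{\lvi{v}{i-1}}$ in $\ri{C}$; this is exactly what Theorem \ref{a} provides.
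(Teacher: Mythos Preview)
Your overall plan---locate a split vertex $\svv$ in $B_v$ by comparing $\rvd{\cdot}$ with $\lvd{\cdot}$ and then invoke Theorem~\ref{b}---is exactly the paper's strategy. The gap is in your monotonicity lemma for $\lvd{\cdot}$: it is false as stated. The order $\ri{L}$ is governed by \emph{left} cliques, whereas adjacency of a vertex $w\in B_v$ to some $\lvi{v}{k}$ is controlled by how far $w$ extends to the \emph{right}. Concretely, take $w_1,w_2\in B_v$ with $\lc{w_1}$ strictly before $\lc{w_2}$, but with $\rc{w_1}$ reaching into $[\lc{\lvi{v}{k-1}},\rc{\lvi{v}{k-1}}]_\ri{C}$ while $\rc{w_2}$ stops short of $\lc{\lvi{v}{k-1}}$; then $\lvd{w_1}\le k-1<k=\lvd{w_2}$ although $w_1$ precedes $w_2$ in $\ri{L}$. (Nothing forces $\lc{w_1}$ to lie past $\lc{w_2}$: Theorem~\ref{no_vertex_further_than_lv} only yields $\lc{w_1}\ge\lc{\lvi{v}{k}}$, which is perfectly compatible with $\lc{w_1}<\lc{w_2}$.) Theorem~\ref{a} supplies only a \emph{lower} bound on $\lvd{w}$ in terms of $\lc{w}$ and cannot block this configuration. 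Hence $\rvd{w}-\lvd{w}$ need not be non-decreasing, and your definition of $\svv$ as the last vertex with $\rvd{\svv}\le\lvd{\svv}$ does not force every \emph{earlier} vertex of $B_v$ to satisfy $\rvd\le\lvd$.

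The paper sidesteps this by never claiming monotonicity. It fixes the apex level $i$ once and decomposes $B_v$ into three explicit ring-intervals. Set $A$ consists of those $w$ whose left clique still lies in the range of $\rvi{v}{i-1}$; there $\rvd\le i-1\le\lvd$, the lower bound coming from Theorem~\ref{a}. Set $B$ is the stretch after $A$ up to (but excluding) the first vertex $u$ adjacent to $\lvi{v}{i-1}$; a contradiction argument using the extremality of $\rvi{v}{i}$ shows every vertex of $B$ is adjacent to $\rvi{v}{i}$, so $\rvd=i\le\lvd$. Set $C$ runs from $u$ to $\ri{L}^{-1}(\lvi{v}{i-1})$; here one checks that $\lvi{v}{i}$ lies in every relevant left clique, giving $\lvd\le i\le\rvd$. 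The split point is $\svv=\ri{L}^{-1}(u)$. Only the two-sided estimates at the single level $i$ are needed, and these are exactly what Theorems~\ref{a} and~\ref{no_vertex_further_than_lv} provide; full monotonicity of $\lvd{\cdot}$ is neither used nor true.
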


\begin{proof}
Let $i$ the apex number of $v$.

If $i=1$ the theorem holds for $\svv=\ri{L}^{-1}(\lvv)$, since every vertex in 
$]\mvv,\lvv[_\ri{L}$ is adjacent to $\rvv$.

If $i>1$ and $\lvi{v}{i}=\rvi{v}{i}$, we have
$$\forall w\in~]\mvv, \last{\rc{\rvi{v}{i-1}}}]_\ri{L}\colon\rvd{w}\leq i-1\leq\lvd{w}$$
and
$$\forall w\in~]\last{\rc{\rvi{v}{i-1}}},\lvv[_\ri{L}\colon\lvd{w}\leq\rvd{w},$$
wherefore, by Corollary \ref{b}, Theorem \ref{separator} holds for $\svv=\last{\rc{\rvi{v}{i-1}}}$.

The following argumentation is partially illustrated in Fig. \ref{neu}.
\begin{figure}[t]
\begin{center}
\includegraphics[width=0.9\textwidth]{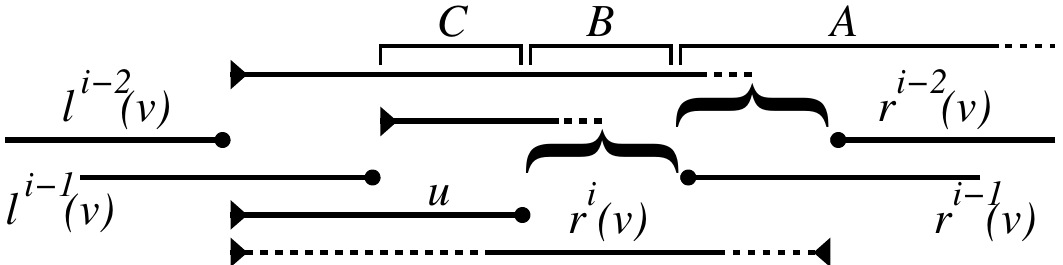}
\end{center}
\caption{
This picture illustrates the proof of Theorem \ref{separator} and partially shows a 
circular-arc graph, where the lines represent arcs.
A line ends with a triangle, if it may reach not further to left or right, respectively.
The upper two lines with a dotted end illustrate that an arc, that has its right end in 
the interval depicted by the rotated 
curly bracket, may reach no further to the left than the triangle of the respective line.
We begin to search for $u$ at the right end of the lower curly bracket.
The rotated square brackets on top mark the areas of the left cliques of 
the vertices in sets $A,B,C$.
The case in which $\lvi{v}{i-1}= u$ looks similar.
}
\label{neu}
\end{figure}

If $i>1$ and $\lvi{v}{i}\neq\rvi{v}{i}$, let
$$A=]\mvv,\last{\rc{r^{i-1}(v)}}]_\ri{L},$$
and $a\in A$.
We have $\rvd{a}\leq i-1$, because $a$ is adjacent to a vertex 
in $\{\rvi{v}{1},\rvi{v}{2},\ldots,\rvi{v}{i-1}\}$.
When $\lvv$ reaches further to the right than $v$, $a$ might appear in a clique in $]\rc{v},\rc{\lvv}]_\ri{C}$ and, therefore, be adjacent to $\lvv$, which implies $\lvd{a} = 1$.
However, in this case, we also have $\rvd{a} = 1$. 
Therefore, we could only have $\lvd{a} <  \rvd{a}$, when $a$ reaches so far to the right, such that it is adjacent to $\lvi{v}{i-2}$.
However, in this case $a$ would reach further to the left than $\lvi{v}{i-1}$, because $a$ is adjacent to $\rvi{v}{i-1}$ and $\lvi{v}{i-1}$ is \q{at most} adjacent to $\rvi{v}{i}$.
Because $\lvi{v}{i-1}$ is the left vertex of $\lvi{v}{i-2}$ this contradicts Theorem \ref{no_vertex_further_than_lv}.
%
Thus, Corollary \ref{b} implies 
\begin{equation}\label{eqd}
\forall a\in A\colon\rvv\in \fe{v}{a}.
\end{equation}

Now we go through $\ri{L}$ beginning at 
$\ri{L}(\last{\rc{\rvi{v}{i-1}}}=\first{\ri{C}(\rc{\rvi{v}{i-1}})}$,\footnote{This 
equality holds by Corollary \ref{coro:tail vertexheal}.} in other 
words, we begin at the first vertex after the \q{last} vertex in  $A$, until we 
eventually come across a vertex $u$ that either is $\lvi{v}{i-1}$ or adjacent 
to $\lvi{v}{i-1}$.\footnote{This vertex 
$u$ may be $l^i(v)$. Also there may be vertices in $A$ that are adjacent to 
$\lvi{v}{i-1}$, in other words, we may have \q{passed by} a vertex that is 
adjacent to $\lvi{v}{i-1}$ already.}
We define
$$B=
]\last{\rc{\rvi{v}{i-1}}},u[_\ri{L}
$$
i.e. $B$ is the set of vertices after $\last{\rc{\rvi{v}{i-1}}}$ and before $u$ in $\ri{L}$.
Note that, by the choice of $u$, no vertex in $B$ is adjacent to $\lvi{v}{i-1}$.

Assume there is a vertex $x\in B$ that is not adjacent to $\rvi{v}{i}$.
When we go through $\ri{C}$ beginning at $\lc{\rvi{v}{i-1}}$ we eventually come across $\rc{\rvi{v}{i-1}}$ 
(since $\rvi{v}{i}$ and $\rvi{v}{i-1}$ are adjacent, we have $\lc{\rvi{v}{i}}\in[\lc{\rvi{v}{i-1}},\rc{\rvi{v}{i-1}}]_\ri{C}$).
Before we come across $\lc{x}$, we come across $\rc{\rvi{v}{i}}$, since otherwise $x$ and $\rvi{v}{i}$ would be adjacent.
We do not come  across $\lc{\lvi{v}{i-1}}$ until we came across $\rc{x}$, since otherwise $x$ and $\lvi{v}{i-1}$ would be adjacent.
It follows that $\lvi{v}{i-1}$ and $\rvi{v}{i}$ are not adjacent and, because $\lvi{v}{i}$ is adjacent to  $\lvi{v}{i-1}$ and $\rvi{v}{i}$, it follows that $\lvi{v}{i}$ reaches further to the right than $\rvi{v}{i}$.
Because $\lvi{v}{i}$ and $\rvi{v}{i}$ are adjacent, $\lc{\lvi{v}{i}}$ 
must appear in $[\lc{\rvi{v}{i}},\rc{\rvi{v}{i}}]_\ri{C}$.
It cannot appear in $[\lc{\rvi{v}{i}},\rc{\rvi{v}{i-1}}]_\ri{C}$, because then $\lvi{v}{i}$ 
is adjacent to $\rvi{v}{i-1}$,
which contradicts Definition \ref{de_r_v}, since $\lvi{v}{i}$ reaches further to the right 
than $\rvi{v}{i}$ and $\rvi{v}{i}$ is the right vertex of $\rvi{v}{i-1}$.
Therefore, we have
$\lc{\lvi{v}{i}}\in~]\rc{\rvi{v}{i-1}},\rc{\rvi{v}{i}}]_\ri{C}$.
Because we came across $\rc{\rvi{v}{i}}$ before $\lc{x}$, we have $]\rc{\rvi{v}{i-1}},\rc{\rvi{v}{i}}]_\ri{C}\subset~]\rc{\rvi{v}{i-1}},\lc{x}[_\ri{C}$ and, therefore,
\begin{equation}\label{wtf}
\lc{\lvi{v}{i}}\in~]\rc{\rvi{v}{i-1}},\lc{x}[_\ri{C}.
\end{equation}
Since the vertices in $\ri{L}$ are primary ordered by their left clique, Inclusion (\ref{wtf}) implies
\begin{equation}\label{wtfwtf}
\lvi{v}{i}\in[\first{\ri{C}(\rc{\rvi{v}{i-1}})},x]_\ri{L}.
\end{equation}
Remember that we began at $\first{\ri{C}(\rc{\rvi{v}{i-1}})}$ 
to go through $\ri{L}$ until we found a vertex $u$ that is adjacent to $\lvi{v}{i-1}$.
Inclusion \ssref{wtfwtf} implies that we must  have come across $\lvi{v}{i}$ before $x$.
But then we have $u=\lvi{v}{i}$.
Since
$$x\notin[\first{\ri{C}(\rc{\rvi{v}{i-1}})},\lvi{v}{i}[_\ri{L}=[\ri{L}(\last{\rc{\rvi{v}{i-1}}}),u[_\ri{L}=B$$ 
contradicts the initial assumption, every vertex in $B$ is adjacent to $\rvi{v}{i}$.

Since by the choice of $u$ no vertex in $B$ is adjacent to $\lvi{v}{i-1}$, we have
$$\forall b\in B\colon\lvd{b}\geq i=\rvd{b},$$
which, by Corollary \ref{b}, implies
\begin{equation}\label{eqc}
\forall b\in B\colon\rvv\in \fe{v}{b}.
\end{equation}
We define $\svv=\ri{L}^{-1}(u)$.
Since the ring-intervals $A$ and $B$ appear consecutively in $\ri{L}$, 
by Construction \ref{ringkombi}, we have
\begin{equation}\label{bbt}
A\cup B=~]\mvv,u[_\ri{L}=~]\mvv,\svv]_\ri{L}.
\end{equation}
From \ssref{eqd}, \ssref{eqc}, and \ssref{bbt} it follows
\begin{equation}\label{eqw1}
\forall w\in]\mvv,\svv]\colon\rv{v}\in \fe{v}{w}.
\end{equation}

Since the vertices in $\ri{L}$ are primary ordered by their left cliques, 
no vertex $$w\in[\first{\ri{C}(\rc{\rvi{v}{i-1}})},\lvi{v}{i-1}]_\ri{L}$$  is adjacent 
to $\rvi{v}{i-1}$ and therefore we have $\rvd{w}\geq i$. We define
$$C=]\svv,\lvi{v}{i-1}[_\ri{L}$$
i.e. $C$ is the set of vertices after $\svv$ and before $\lvi{v}{i-1}$ in $\ri{L}$.
Note that $$B,C\subseteq[\first{\ri{C}(\rc{\rvi{v}{i-1}})},\lvi{v}{i-1}]_\ri{L}.$$
We have
\begin{equation}\label{inca}
\forall \mc{c}\in[\lc{\ri{L}(\svv)},\lc{\lvi{v}{i-1}}]_\ri{C}\colon\lvi{v}{i}\in \mc{c}
\end{equation}
since otherwise $u=\ri{L}(\svv)$ would reach further to the left than $\lvi{v}{i}$ although both are 
adjacent to $\lvi{v}{i-1}$ and $\lvi{v}{i}$ is the left vertex of $\lvi{v}{i-1}$.
This would contradict Theorem \ref{no_vertex_further_than_lv}.
It follows
$$\forall c\in C\colon\lvd{c}\leq i\leq \rvd{c},$$
which, by Corollary \ref{b}, implies 
\begin{equation}\label{eqa}
\forall c\in C\colon\lvv\in \fe{v}{c}.
\end{equation}
Obviously we have 
\begin{equation}\label{eqb}
\forall w\in[\lvi{v}{i-1},\lvv[_\ri{L}\colon\lvv\in \fe{v}{w}.
\end{equation}
By Construction \ref{ringkombi}, we have 
$C\cup[\lvi{v}{i-1},\lvv[_\ri{L}=]\svv,\lvv[_\ri{L}$
and therefore \ssref{eqa} and \ssref{eqb} imply 
\begin{equation}\label{eqw2}
\forall w\in]\svv,\lvv[_\ri{L}\colon \lvv\in \fe{v}{w}.
\end{equation}
The statement follows from \ssref{eqw1} and \ssref{eqw2}.
\end{proof}

In Section \ref{lvbisv} we assigned $[\lvv,x]_\ri{L}$ to directed edge $(v,\lvv)$
for some vertex $x$ to the left of $v$.
If $]\svv,\lvv[_\ri{L}$ is assigned to directed edge $(v,\lvv)$ 
additionally (as implied by Theorem \ref{separator}) 
the intervals on directed edge $(v,\lvv)$ can be compressed.
Therefore, if neither dominating vertices nor counter vertices exist, 
$v$ given $\ri{L}$ suffices a shortest path 2-SIRS, 
but only $(v,\rvv)$ is assigned two ring-intervals.

Below Definition \ref{de_r_v} we stated that $v$ given $\ri{L}$ suffices 
a shortest path 1-SIRS, if we can choose $\rvv=\lvv$ or $\rvv=\mvv$.
This is the case because then the ring-intervals assigned to directed edge $(v,\rvv)$ 
can be compressed as well.

Theorem \ref{theo:1sirs} covers a further case where $v$ given $\ri{L}$ suffices a shortest path 
1-SIRS.
Since Theorem \ref{theo:1sirs} is not essential for the proof of Theorem \ref{mainTheo}, 
we leave the proof to the interested reader.

\begin{theorem}\label{theo:1sirs}
Let $(\mc{X},\ri{C})$ be a clique-cycle for a circular-arc graph $G=(V,E)$, $\ri{L}$ 
given by Listing \ref{fig:animals}, $v\in V$ not a dominating vertex, $\lvv$ the 
left vertex of $v$, and $\mvv$ the middle vertex of $v$.
If $\mvv$ is adjacent to every vertex face-to-face with $v$ that $\lvv$ is not adjacent 
to, $v$ suffices a shortest path 1-SIRS.
\end{theorem}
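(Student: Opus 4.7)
My approach is to extend the partial arc-labeling constructed in Section \ref{lvbisv} so that every outgoing arc of $v$ still carries only one ring-interval. Section \ref{lvbisv} has already placed $[\mvv,\mvv]_\ri{L}$ on arc $(v,\mvv)$, a ring-interval with left endpoint $\lvv$ on arc $(v,\lvv)$, and exactly one ring-interval on every other outgoing arc incident to a vertex of $A_v\cup C_v$. Hence it suffices to distribute the face-to-face vertices $B_v$ in such a way that (i) each $w\in B_v$ is routed via a first vertex from $v$ to $w$, and (ii) the new ring-intervals can be compressed, by Corollary \ref{ringkombi}, into the intervals already placed on $(v,\mvv)$ and $(v,\lvv)$.

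I would partition the non-dominating part of $B_v$ into $B_r=\{w\in B_v\setminus\dom : w\text{ is adjacent to }\mvv\}$ and $B_l=(B_v\setminus\dom)\setminus B_r$. By the hypothesis every $w\in B_l$ is adjacent to $\lvv$, and every dominating vertex in $B_v$ is adjacent to all of $v,\mvv,\lvv$. The central technical claim is that $B_r$ is a ring-interval of the form $[\ri{L}(\mvv),\cdot]_\ri{L}$ and $B_l$ is a ring-interval of the form $[\cdot,\ri{L}^{-1}(\lvv)]_\ri{L}$. This uses two facts: first, that $\ri{L}$ orders vertices primarily by their left clique; second, that a non-dominating $w\in B_v$ is adjacent to $\mvv$ if and only if $\lc{w}\in[\ri{C}(\rc{v}),\rc{\mvv}]_\ri{C}$. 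The latter equivalence holds because $w$ non-adjacent to $v$ forces the clique range $[\lc{w},\rc{w}]_\ri{C}$ of $w$ to avoid $[\lc{v},\rc{v}]_\ri{C}$, so the only way for it to overlap $\mvv$'s range $[\rc{v},\rc{\mvv}]_\ri{C}$ is via $\lc{w}$ itself lying in $[\ri{C}(\rc{v}),\rc{\mvv}]_\ri{C}$. Consequently $B_r$ is a prefix and $B_l$ a suffix of the ring-interval $B_v$.

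I then assign $B_r$ to arc $(v,\mvv)$, $B_l$ to arc $(v,\lvv)$, and every dominating $d\in B_v$ to the direct arc $(v,d)$ via the singleton $[d,d]_\ri{L}$. For $w\in B_r\cup B_l$ the paths $(v,\mvv,w)$ and $(v,\lvv,w)$ have length $2$ and, since $v$ is not adjacent to $w$, are shortest, so $\mvv$ respectively $\lvv$ belongs to $\fe{v}{w}$. By Corollary \ref{ringkombi}, the new ring-interval of shape $[\ri{L}(\mvv),\cdot]_\ri{L}$ on $(v,\mvv)$ merges with the pre-existing $[\mvv,\mvv]_\ri{L}$ into a single ring-interval, and analogously the new ring-interval of shape $[\cdot,\ri{L}^{-1}(\lvv)]_\ri{L}$ on $(v,\lvv)$ merges with the interval assigned by Theorem \ref{theototheleft}; the degenerate subcases where $\lvv$ does not exist or one of $B_r,B_l$ is empty are handled trivially.

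The main obstacle is to ensure that the dominating vertices inside $B_v$, if any, do not fragment the prefix $B_r$ or the suffix $B_l$. This is settled by Line 1 of Listing \ref{pseudoLabeling}, which makes all dominating vertices share the same left clique $\ri{C}(\mc{z})$ and thus occupy a single consecutive block in $\ri{L}$. A short case analysis on the position of this block with respect to $[\ri{C}(\rc{v}),\rc{\mvv}]_\ri{C}$ shows that after removing these vertices the remaining parts of $B_r$ and $B_l$ are still ring-intervals abutting the pre-existing intervals on $(v,\mvv)$ and $(v,\lvv)$, so the desired compression to a single ring-interval per outgoing arc of $v$ still succeeds.
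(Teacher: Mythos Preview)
The paper does not prove Theorem \ref{theo:1sirs}; it explicitly leaves the argument to the reader. Your approach is the natural one the surrounding text suggests: split $B_v$ into a prefix routed via $\mvv$ and a suffix routed via $\lvv$, then compress with the intervals already placed in Section \ref{lvbisv}. The key equivalence you use --- a non-dominating $w\in B_v$ is adjacent to $\mvv$ if and only if $\lc{w}\in[\ri{C}(\rc{v}),\rc{\mvv}]_\ri{C}$ --- is correct, since $w$ not adjacent to $v$ forces $[\lc{w},\rc{w}]_\ri{C}\subseteq[\ri{C}(\rc{v}),\ri{C}^{-1}(\lc{v})]_\ri{C}$ and hence any overlap with $\mvv$'s clique range $[\rc{v},\rc{\mvv}]_\ri{C}$ must occur through $\lc{w}$. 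In the absence of dominating face-to-face vertices this yields exactly the prefix/suffix decomposition you need, and the compressions on $(v,\mvv)$ and $(v,\lvv)$ go through.

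The genuine gap is in your last paragraph. Your claim that, after removing the dominating block, the remaining parts of $B_r$ and $B_l$ are still single ring-intervals abutting the intervals on $(v,\mvv)$ and $(v,\lvv)$ is false in general. If the common left clique $\ri{C}(\mc{z})$ of the dominating vertices lies strictly between $\ri{C}(\rc{v})$ and $\rc{\mvv}$, deleting the dominating block from the $B_r$ region leaves \emph{two} pieces, only the first of which abuts $[\mvv,\mvv]_\ri{L}$; the second would force a second interval on $(v,\mvv)$. The repair is immediate but different from what you wrote: as the paper itself shows in the paragraph ``Dominating Face-to-Face Vertices exist'', when $B_v\cap\dom\neq\emptyset$ one already obtains a $1$-SIRS (via $[\ri{L}(\mvv),d_l]_\ri{L}$ on $(v,d_l)$, $[d_r,\ri{L}^{-1}(\lvv)]_\ri{L}$ on $(v,d_r)$, and singletons on the remaining $(v,d_i)$) without invoking the hypothesis on $\mvv$ at all. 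You should therefore dispose of that case by citing that argument and reserve your $B_r/B_l$ analysis for the case $B_v\cap\dom=\emptyset$.
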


\paragraph{Dominating Vertices or Counter Vertices exist}\label{gyros}

In the foregoing section the existence of dominating and counter vertices was excluded and is therefore discussed in this section.
The vertices to be distributed are those face-to-face with $v$, i.e., the vertices in $B_v$.
By Theorem \ref{Lvrl_v}, the only vertices in $B_v$ that are adjacent to $v$ are dominating vertices;
this case is considered first.
Next, the case where dominating vertices not face-to-face with $v$ or counter vertices of $v$ exist is addressed. 
Finally, the case where a pair of counter vertices exists (but $v$ has no counter vertex) is discussed. 

\subparagraph{Dominating Face-to-Face Vertices exist}

Listing \ref{fig:animals} ensures that all dominating vertices appear consecutively in $\ri{L}$, 
wherefore we can find two dominating vertices $d_l$ and $d_r$ such that $[d_l,d_r]_{ \ri{L}}=\dom$. 
Let $x\in V$ such that $]\mvv, x]_{ \ri{L}} = B_v$.
Since by Theorem \ref{Lvrl_v}, $\dom$ are the only vertices in $B_v$ that are adjacent to $v$, we can assign $]\mvv,d_l]_{ \ri{L}}$ to directed edge $(v,d_l)$, $[d_r,x]_{ \ri{L}}$ to directed edge $(v,d_r)$, and for the remaining dominating vertices $d_i\in~]d_l,d_r[$, $[d_i,d_i]_\ri{L}$ to directed edge $(v,d_i)$.
It follows that $v$ given $\ri{L}$ suffices a shortest path 1-SIRS.

\subparagraph{Dominating Face-to-Face Vertices  do not exist}

If there are no dominating vertices in $B_v$, Theorem \ref{Lvrl_v} implies that no vertex in $B_v$ is adjacent to $v$.
Therefore, by Corollary \ref{adjacentToVorCoV}, every vertex that
is a counter vertex of $v$ is adjacent to every vertex in $B_v$.
%
%
Let $u\in\co{v}\cup\dom$ and $w\in B_v$.
Then $u$ is a first vertex 
from $v$ to $w$, wherefore we can assign ring-interval $B_v$ to 
directed edge $(v,u)$, which had hitherto one ring-interval assigned.
Thus, $v$ given $\ri{L}$ suffices a shortest path 2-SIRS, wherein only directed edge $(v,u)$ has two ring-intervals assigned.
By Theorem \ref{theo:1sirs}, $v$ given $\ri{L}$ suffices a 
shortest path 1-SIRS, if $\mvv\in\co{v}\cup\dom$.

\subparagraph{Pairs of Counter Vertices Exist but $\mathbf{v}$ is in none of them}\label{counteratall}

Assume $G$ does not contain dominating 
vertices and the considered vertex $v$  has no counter vertices but 
there exists at least one pair $(w,c_w)$ of counter vertices.
We have $B_v=~]\mvv,\lvv[_\ri{L}$ and 
no vertex in $B_v$ is adjacent to $v$.
We make a distinction of cases for whether $v$ is adjacent to one or to both counter vertices.

Assume $v$ is adjacent to $w$ as well as $c_w$.
There is a clique $\mc{c}\in\mc{X}$ that contains $v$, $w$, and $c_w$.
When we begin at $\mc{c}$ to go through $\ri{C}$,
we eventually reach a clique $\mc{d}$ that contains only one of $w$ or $c_w$
and not the other.
Without loss of generality, let $w$ be contained in $\mc{d}$ and let $\mc{e}$ be the first clique that contains $c_w$ again.
$\mc{e}$ is the left clique of $c_w$ and 
contains $\lvv$ (we might have $c_w=\lvv$), as
otherwise, $c_w$ would reach further to the left than $\lvv$, 
which contradicts Theorem \ref{no_vertex_further_than_lv}.
Since $w$ and $c_w$ are counter vertices, all cliques we came across contained $w$.
Since the vertices in $\ri{L}$ are primary ordered by their 
left cliques and we also must have traversed $\rc{v}$, which contains $\mvv$, every vertex in 
$B_v=~]\mvv,\lvv[_\ri{L}$ appeared in at least one of the cliques we came across.
Therefore, $w$ is adjacent to every vertex in $B_v$.
Since $\ema{I}_v$ hitherto maps directed edge $(v,w)$ to one ring-interval, we can assign $B_v$ to directed edge $(v,w)$. 
Thus, $v$ given $\ri{L}$ suffices a shortest path 2-SIRS, wherein only directed edge $(v,w)$ is 
actually mapped to two ring-intervals.

Now assume $v$ is adjacent to one of the counter vertices and not to the other.
Without loss of generality, let this vertex be $w$.
Let $r$ be a vertex adjacent to $v$ that, similar to Definition \ref{de_r_v}, reaches farthest to the right.
Since, by Theorem \ref{no_vertex_further_than_lv}, $\lvv$ is one of the vertices 
adjacent to $v$ that reaches farthest to the left, $\lvv$ as well as $r$ are adjacent to $c_w$.
In fact, we might have $w=\lvv$ or $w=r$.
Let $\mv{r}$ be the middle vertex of $r$ and $A_r=~]r,\mv{r}]_\ri{L}$ the set of vertices to the right of $r$.
$A_r$ is not empty, since otherwise $G$ would be an interval graph.
Also $\mvv$ is contained in $A_r$, if $r\neq\mvv$.
By Theorem \ref{Lvrl_v}, every vertex in $A_r$ is adjacent to $r$.
Consider we go through $\ri{L}$ beginning at $\ri{L}(\mvv)$ until we reach $\mv{r}$ or $\lvv$.
If we have $\lvv=\mv{r}$, we consider it as reaching $\lvv$.
If we first reach $\lvv$, we exactly traversed the vertices in $]\mvv,\lvv]_\ri{L}\subseteq B_v$.
It follows that we have $B_v\subseteq A_r$ and thus
assign $B_v=]\mvv,\lvv[_\ri{L}$ to directed edge $(v,r)$.
%
Now assume we first reach $\mv{r}$. 
By Theorem \ref{Lvrl_v}, every vertex in $]\mvv,\mv{r}]_\ri{L}\subseteq A_r$ 
is adjacent to $r$, wherefore we can assign $]\mvv,\mv{r}]_\ri{L}$ to directed edge $(v,r)$.
If $\ri{L}(\mv{r})=\lvv$, all vertices in $B_v$ are distributed.
Otherwise let $X=]\mv{r},\lvv[_\ri{L}\subset B_v$ be the vertices left to be distributed.
Since, by Theorem \ref{no_vertex_further_than_lv}, $\lvv$ is a vertex adjacent to $v$ 
that reaches farthest to the left and $\lvv$ is adjacent to $c_w$, 
for every vertex $u\in X$ either $(v,\lvv,u)$ or $(v,\lvv,c_w,u)$ is a shortest path.
Thus, we can assign $X$ to directed edge $(v,\lvv)$ (and compress the two intervals on this edge),
wherefore all vertices are distributed and only directed edge $(v,r)$ is assigned two ring intervals.

\section{Improving the Space Requirement}\label{improving}

This section shows that although circular-arc graphs do not allow shortest path 1-IRSs in general, 
the number of intervals in a shortest path 2-SIRS for a circular-arc graph is 
bounded close to the number of intervals in an 1-IRS.
Let $G_d=(V,A)$ be the directed symmetric version of a strict circular-arc graph.
A shortest path 2-SIRS for $G_d$ can be obtained by ordering the vertices of $G_d$ 
according to Listing \ref{fig:animals}
and then labeling the outgoing directed edges of every vertex according to 
Section \ref{lvbisv} and \ref{mvbislv}.
As pointed out in these sections, there is at most one outgoing edge per vertex that is assigned two intervals.
We have $|V|\leq|A|/2$, since $G$ is a strict circular-arc graph, and $|V|=|A|/2$ only holds when $G$ is a ring in which case our labeling yields a shortest path 1-SIRS.
Therefore, the number of intervals in the constructed 2-SIRS is less than $|A|+|V| = 1.5\cdot |A|$, while already a 1-IRS for $G_d$ permits up to $|A|$ intervals.

\section{Implementation}\label{implement}

This section outlines how the interval routing scheme implied by the proof 
of Theorem \ref{mainTheo} can be implemented.
Let $G=(V,E)$ be a circular-arc graph with $|V|=n$ and $|E|=m$ and 
$(\mc{X},\ri{C})$ be a clique-cycle for $G$ with $|\mc{X}|=c$
maximal cliques ($\mc{X}$ does not need to contain sets of vertices; plain 
elements representing the cliques are sufficient).
We discuss how the vertex order $\ri{L}$ and the critical vertices 
$\lvv,\mvv,\rvv,\svv$ for every vertex $v\in V$ can be determined in $\ema{O}(c+n^2)$ time.
We assume that every vertex $v\in V$ has a pointer $\lc{v}$ to 
its left clique and a pointer $\rc{v}$ to its right clique in $\mc{X}$.

We define \emph{broadness of a vertex $v$} as the number of cliques 
in $\mc{X}$ that contain $v$.
By numbering the elements in $\mc{X}$ as they appear in $\ri{C}$, beginning at 
an arbitrary element, we construct a bijective mapping 
$\fu{N}\colon\mc{X}\rightarrow\{1,2,\ldots,|\mc{X}|\}$, which allows to compute the 
broadness of a vertex in  $\mathcal{O}(1)$ time.
Sorting the vertices ascending by their broadness 
requires $\mathcal{O}(n\cdot\log(n))$ time. We introduce a list $\fu{Q}(\mc{c})$ for 
every clique $\mc{c}\in\mc{X}$. Next the vertices are traversed in the sorted order and 
every vertex $v$ is added to list $\fu{Q}(\lc{v})$.
By concatenating the lists as their respective cliques appear in $\ri{C}$ we obtain the vertex order
$\ri{L}$ in $\mathcal{O}(n\cdot\log(n))$ time.
The left vertex $\lvv$ of a given vertex $v\in V$ can be determined as follows:
we inspect all vertices that are adjacent to $v$ and store the vertex $w$ that reaches 
farthest to the left (for two adjacent vertices $w,w'$, we can determine in constant time 
which one reaches further to the left, by considering $\fu{N}(\lc{w})$ and $\fu{N}(\lc{w'})$) 
and appears first in $\fu{Q}(\lc{w})$ (this can also be decided in constant time, by numbering 
the elements in each list consecutively). Vertex $w$ must be $\lvv$.
It follows that the set of left vertices can be computed in $\mathcal{O}(m)$ time.
The set of right vertices can be computed in similarly.
Determining the set of middle vertices is simpler and can be accomplished in 
$\mathcal{O}(n)$.
For a given vertex $v$, $\lvi{v}{i}$ and $\rvi{v}{i}$ can be determined by traversing 
the sequences $\lvi{v}{1},\lvi{v}{2},\ldots$ and $\rvi{v}{1},\rvi{v}{2},\ldots$ synchronously; 
then the determination of  $\svv$ is a simple task and can be done in $\fu{O}(n)$
for a specific vertex and in $\fu{O}(n^2)$ for all vertices.

Since the algorithm includes constant time operations on the elements in $\mc{X}$ 
and $|\mc{X}|=c$ is linear within the number of vertices, the overall runtime 
is $\mathcal{O}(c+n^2+m)=\mathcal{O}(n^2)$.
The algorithm can be extended to  handle the existence of dominating 
and counter vertices without exceeding this time bound.


\section{Conclusions}\label{conc}

We showed that the class of circular-arc graphs has strict compactness 2.
Throughout the proof special cases of strict circular-arc graphs that allow 
1-SIRSs were highlighted (not all cases were pointed out for the sake of brevity), which is in particular interesting, since the class of circular-arc graphs is a super-class of the class of interval and unit circular-arc graphs, which always allow 1-SIRSs.
We showed
that the constructed 2-SIRS requires a number of intervals that is 
closer to the maximal number of intervals in an 1-IRS than in a 2-IRS. 
An open question is, if strict circular-arc graphs exist that allow 1-SIRS but the vertex ordering generated by
Listing \ref{pseudoLabeling} does not allow these.



\bibliographystyle{alpha}
\bibliography{bibfile}


\end{document}